\documentclass[10pt,final]{IEEEtran}
\usepackage{amsmath}
\usepackage{amsthm}
\newtheorem{theorem}{Theorem}
\usepackage{amssymb}
\usepackage{amsfonts}
\usepackage{latexsym}
\usepackage{graphicx}
\usepackage{epstopdf}
\usepackage{enumerate}
\usepackage{multicol}
\usepackage{color}
\usepackage{array}
\usepackage{algorithm,algorithmic}
\usepackage{bm}
\IEEEoverridecommandlockouts
\allowdisplaybreaks[4]

\begin{document}
\title{Metasurface Antenna-Enabled LEO Satellite Constellation Communications: Design and Optimization}
\author{Wenfei Yao, Xiaoming Chen, Qi Wang, Qiao Qi, and Ming Ying
\thanks{Wenfei Yao, Xiaoming Chen, Qi Wang, and Ming Ying are with the College of Information Science and Electronic Engineering, Zhejiang University, Hangzhou 310027, China (e-mails: \{yaowenfei, chen\underline{~}xiaoming, wang-qi, ming\underline{~}ying\}@zju.edu.cn). Qiao Qi is with the School of Information Science and Technology, Hangzhou Normal University, Hangzhou 311121, China (e-mails: qiqiao@hznu.edu.cn).}}
\maketitle

\begin{abstract}
Next-generation low Earth orbit (LEO) satellite constellations face critical bottlenecks in spectral efficiency and onboard hardware complexity.  To overcome these limitations, this paper introduces a novel architecture enabled by metasurface antennas (MAs) at the LEO satellites. In particular, MAs are metasurface-integrated feed antennas that perform high-precision beamforming directly in the wave domain, thereby effectively mitigating multi-user interference. Based on such an antenna architecture, a weighted sum rate (WSR) maximization problem is formulated by jointly optimizing the scheduling of feed antennas to terrestrial users (TUs) and the passive beamforming of the metasurface for system performance enhancement. To address this mixed-integer nonlinear programming (MINLP) challenge, an alternating optimization (AO)-based joint scheduling and beamforming algorithm is proposed. On the one hand, the proposed algorithm incorporates a polynomial-time minimum-cost maximum-flow (MCMF) method, which is dedicated to the optimal scheduling of feed antennas and TUs. On the other hand, it adopts a weighted minimum mean square error (WMMSE) method integrated with semidefinite relaxation (SDR) technique, which is tailored for metasurface beamforming design. Simulation results confirm the effectiveness of the proposed algorithm for MA-enabled LEO satellite constellation communications.
\end{abstract}

\begin{IEEEkeywords}
6G, metasurface, LEO satellite constellation, user scheduling, beamforming design
\end{IEEEkeywords}

\section{Introduction}
Satellite communication (SATCOM) has emerged as a pivotal part in the envisioned sixth-generation wireless networks, providing the essential infrastructure for ubiquitous connectivity. Among the satellites operate in different types of orbits, low Earth orbit (LEO) satellites have lower costs, shorter transmission delays, and higher signal power \cite{LEO satellite advantage}. Driven by the significant reduction in rocket launch costs and breakthroughs in communication technologies, LEO satellite constellations have entered a phase of prosperous development, with industry leaders including Starlink, OneWeb, Kuiper, and Telesat making significant progress in deploying their respective networks \cite{LEO satellite constellation}. By leveraging densely interconnected satellite networks, the LEO constellations are poised to achieve truly seamless global coverage and robust, high-performance interconnection services \cite{LEO satellite constellation project}.

To satisfy the escalating requirements for both increased data traffic and higher quality of service (QoS), multi-beam satellite system (MBSS) has been extensively studied in SATCOMs \cite{MSS}--\cite{MSS2}. A key challenge in MBSS operation is severe inter-beam interference, which acts as a major factor degrading system throughput and QoS performance. To mitigate this issue, digital beamforming techniques are widely adopted in MBSSs. Moreover, multi-satellite cooperation with distributed beamforming have been studied for LEO satellite constellations to mitigate significant co-channel interference \cite{Distributed beamforming}. This approach enables adjacent satellites to share channel state information (CSI) and collaboratively serve terrestrial users (TUs) through cooperative beamforming and dynamic user scheduling.  This hierarchical coordination paradigm is conceptually aligned with software-defined networking principles, which enable scalable and energy-efficient resource management in complex networks, including emerging 5G/6G multimedia internet of vehicles (IoV) systems \cite{Montazerolghaem2025IoV}. For example, the authors in \cite{leader-follower strcuture} introduced a hierarchical coordination framework for satellite clusters, where a designated leader satellite orchestrates cluster operations through inter-satellite links (ISLs), including information exchange and control signaling with follower satellites.  The work \cite{User scheduling of cooperative satellite} proposed a computationally efficient cooperative framework for LEO constellation networks, achieving improved spectral efficiency via coordinated beamforming and optimized user scheduling. Moreover, the study \cite{Link Scheduling in Satellite Networks} investigated dynamic link scheduling in LEO satellite networks, accounting for spatio-temporal correlations arising from orbital motion patterns.

The above studies have all proven the effectiveness of MBSS in satellite constellations. Nevertheless, implementing digital precoding necessitates deploying high performance baseband processing units onboard the satellite, leading to elevated power demands \cite{Digital beamforming}. In particular, MBSS relys on digital beamforming with high-resolution digital-to-analog converters and radio frequency (RF) chains, incurring prohibitive hardware expenses. While hybrid beamforming techniques offer a potential reduction in hardware overhead, the constrained waveform design flexibility may compromise system performance. Fortunately, metasurface, also known as reconfigurable intelligent surface (RIS), have demonstrated remarkable potential for both spectral and energy efficiency enhancement in next-generation wireless systems. Specifically, metasurface is a two-dimensional planar array that can employ programmable meta-atom elements to precisely control wave propagation characteristics with low hardware overhead \cite{RIS definiation}. Since they operate without conventional RF amplification chains and mainly rely on low-power tuning elements to adjust their phase responses, their power consumption and heat dissipation are typically much lower than those of fully digital or hybrid beamforming architectures employing multiple active RF chains \cite{YWF1, YWF2}. Existing studies indicate that the overall power consumption and thermal footprint of metasurface-based systems are dominated by lightweight control and biasing circuitry, which remain modest even for large surface sizes \cite{RIS_low_energy}. Typically, metasurface is employed to intelligently circumvent signal blockages by dynamically redirecting incident waves toward intended receivers. This inherent capabilities of metasurface make them especially valuable in SATCOM, where they can mitigate propagation loss and improve link quality by precisely directing signals to TUs \cite{SATCOM RIS}.

Furthermore, emerging metasurface antenna (MA)-enabled architectures demonstrate improved energy efficiency and hardware efficiency, positioning them as viable successors to digital beamforming technologies in future communication systems. In particular, a MA generally consists of a feed antenna array structure coupled with a metasurface layer composed of meta-atoms embedded with active tuning components, enabling software-defined beamforming and adaptive radiation patterns \cite{MA1}, \cite{MA2}. By joint optimizing scheduling of feed antenna and user and metasurface phase shifts, the MA enables native wave-domain signal processing through precise meta-atom transmission coefficient adjustments \cite{Optimization of MA}. This architecture effectively circumvents conventional digital signal processing chains, offering a more efficient and adaptive beamforming solution \cite{RIS-based antenna 1}. The distinctive beamforming and wavefrom manipulation abilities of MA have motivated numerous studies investigating their practical implementations. In \cite{RIS-based antenna 2}, Dai et al. presented the first wireless communication prototype employing a 256-element RIS, demonstrating significant power savings over phased arrays while achieving comparable gain. Experiments at 2.3 GHz and 28.5 GHz validated its real-time beamforming capability for high-definition video transmission. The authors in \cite{RIS channel model} derived physics-based free-space path-loss models for RIS-assisted links and verified them with anechoic-chamber measurements. The work \cite{RIS-integrated BS} proposed a RIS-integrated base station that deploys RIS close to the feed antenna of base station, enabling low-complexity, passive beamforming for uplink MIMO. A beam-search plus path-antenna pairing algorithm is devised and experimentally verified to maximize sum-rate while avoiding per-element channel estimation. In addition, the authers in \cite{SIM Satellite 1} demonstrated that mounting MA on LEO satellites enabled wave-domain multiuser beamforming, replacing digital precoding and achieving near-optimal sum rate performance using only statistical CSI. The work in \cite{SIM Satellite 2} presented a MA-equipped LEO satellite and evaluated its path-loss performance for extreme satellite-to-ground links under dynamic elevation angles. It is worth noting that practical reconfigurable intelligent surface (RIS) or metasurface-assisted wireless systems are inevitably affected by hardware impairments, such as phase noise, finite phase resolution, nonlinear response of tuning elements, and channel estimation errors. These non-idealities may degrade the achievable performance if not properly accounted for. Recent studies \cite{Li_TVT_2024}-\cite{Liu_GLOBECOM_2020} have investigated the impact of hardware impairments on RIS-aided systems and proposed robust transmission and channel estimation techniques to mitigate their effects.

Although existing studies have explored MA in SATCOM, they are predominantly limited to single-satellite scenarios \cite{SIM Satellite 1}, \cite{SIM Satellite 2}. Furthermore, much of this existing research \cite{Optimization of MA}-\cite{SIM Satellite 2} adopts a decoupled approach to system design, focusing solely on addressing metasurface beamforming in isolation while overlooking the vital need for joint optimization of user scheduling and feed antenna allocation. Yet scheduling plays a critical role in maximizing the value of beamforming. A well-designed scheduling strategy can intelligently match feed antennas to TUs based on channel conditions, ensuring the enhanced signal gains from metasurface beamforming are effectively allocated to TUs. This not only avoids wasting beamforming gains on poorly matched feed antenna-TU pairs but also further boosts overall system performance. To the best of our knowledge, a holistic framework that integrates these coupled design aspects for the LEO satellite constellation system remains unexplored. In this context, this paper addresses this critical research gap by proposing a MA-enabled LEO satellite constellation communication system, with the major contributions being summarized as follows:
\begin{enumerate}

    \item We put forward a MA-enabled LEO satellite constellation communication system that leverages inter-satellite cooperation and employs metasurfaces to perform low-cost beamforming in the wave-domain, effectively suppressing multi-user interference.

    \item We formulate a mixed-integer nonlinear programming (MINLP) problem in order to maximize the system's weighted sum rate (WSR). Then, an alternating optimization (AO)-based algorithm is proposed, wherein the scheduling of feed antenna and TUs is addressed by a minimum-cost maximum-flow (MCMF) method, while the metasurface beamforming is optimized via the weighted minimum mean square error (WMMSE) method integrated with semidefinite relaxation (SDR) technique.
	
	\item We explore the convergence and complexity properties of the proposed algorithm. Extensive numerical experiments further validate the effectiveness of the proposed algorithm, demonstrating substantial performance improvements over existing benchmarks.
\end{enumerate}

\emph{Organization:} The subsequent sections of this paper are structured as follows. Section II introduces the system model of MA-enabled LEO satellite constellation communication and formulates the optimization problem. Then, Section III proposes a joint scheduling and beamforming design algorithm for MA-enabled LEO satellite constellation communication. Section IV validates the effectiveness of the proposed algorithm through simulation results. Lastly, Section V concludes the paper.

\emph{Notations}:  Bold lower-case and upper-case letters denote column vectors and matrices, respectively. $\mathbb{C}^{M\times N}$ denotes a complex matrix of size $M \times N$. $\odot$ represents the Hadamard product. $\otimes$ denotes the Kronecker product. $(\cdot)^{*}$ denotes the complex conjugate operation. $(\cdot)^{\text{T}}$ and $(\cdot)^\text{H}$ denote the transpose and conjugate transpose, respectively.  $\mathcal{CN}(\mu,\sigma^2)$ denotes the circularly symmetric complex Gaussian distribution with $\mu$ being the mean and $\sigma^2$ being the variance. $|\cdot|$ and $\|\cdot\|$ denote the absolute value of scalar and the L2-norm of vector, respectively. $\text{diag}(\cdot)$ and $\text{vec}(\cdot)$ denote the process of diagonalization and vectorization, respectively. $\left[\mathbf{A} \right]_{l,l}$ denotes the $l$-th diagonal elements of the matrix $\mathbf{A}$.

\section{System Model}
\begin{figure}[!ht]
	\centering
	\includegraphics[width=3.4in]{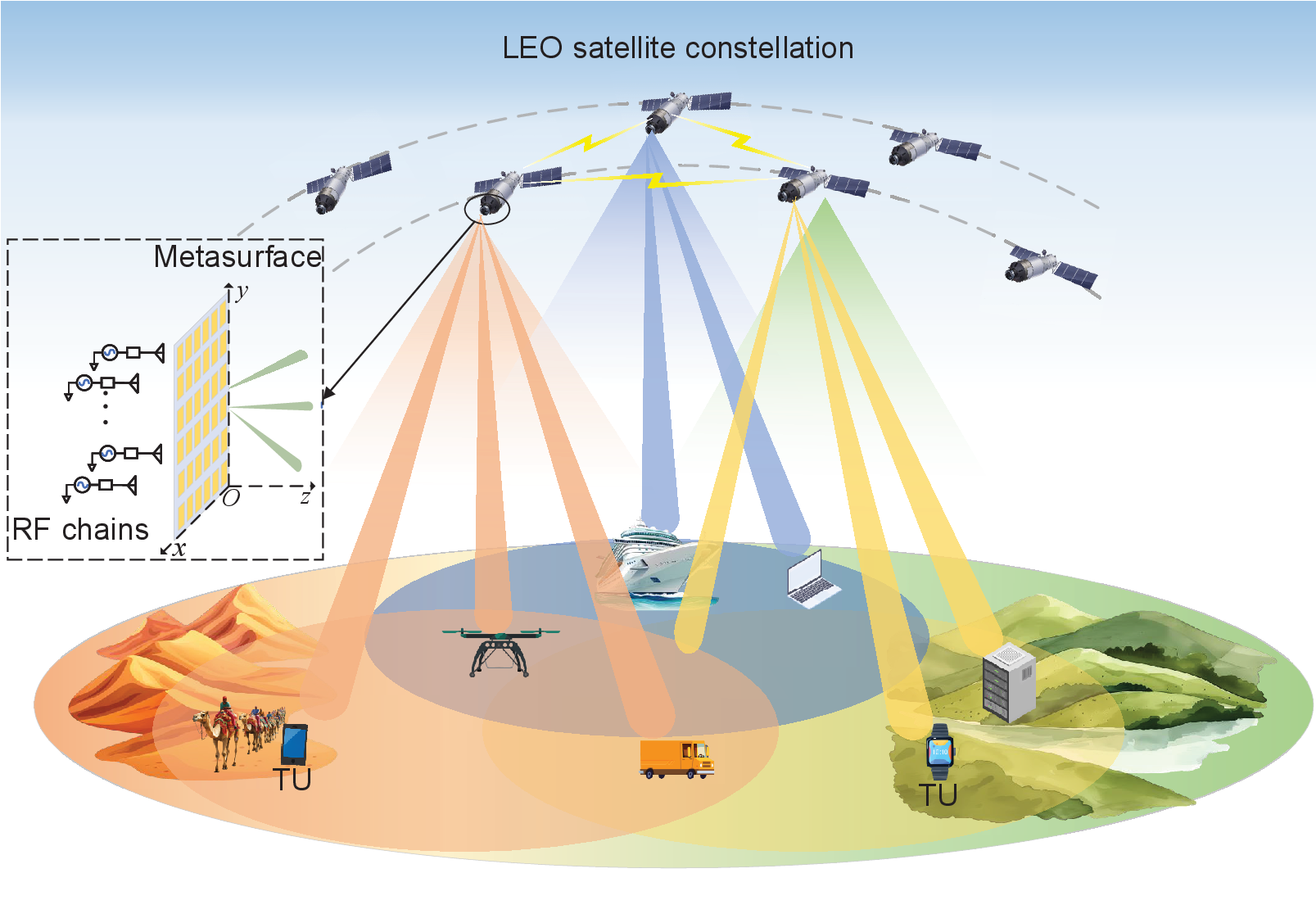}
	\caption{System model of the MA-enabled LEO satellite constellation communication}
	\label{system model}
\end{figure}
Consider a MA-enabled LEO satellite constellation communication system where $S$ satellites provide communication services to $K$ single-antenna TUs, as shown in Fig. \ref{system model}. Let $\mathcal{U}_{s}$ denote the set of TUs located within the coverage area of the $s$-th satellite and let $\mathcal{S}_{k}$ denote the set of visible satellites for the $k$-th TU. Each LEO satellite is equipped with an autonomous control MA and adopts full frequency multiplexing to maximize frequency utilization. Specifically, the MA consists of $N_t$ feed antennas each powered by an RF chain and a metasurface composed of a uniform planar array (UPA) with $L = L_x \times L_y $ meta-atom elements, where $ L_x$  and $ L_y $ denote the number of elements along the column and row direction, respectively. In the considered MA-enabled LEO satellite constellation communication system, each TU is solely served by one feed antenna of a LEO satellite selected from its visible satellite set $\mathcal{S}_k$. Specifically, the signal stream required by a TU is transmitted exclusively through its assigned feed antenna, whereupon the metasurface dynamically shapes the signal. In particular, a leader satellite is designated, and the remaining $S-1$ satellites connect to it via ISLs to share real-time information and necessary CSI \footnote{ It is noted that the coordination between the leader and follower satellites relies on high-capacity ISLs, e.g., optical or millimeter-wave/terahertz ISLs, over which CSI and optimized parameters are exchanged. Therefore, the associated signaling overhead and latency are negligible relative to the optimization time scale.}. Additionally, the leader satellite leverages its onboard processing capabilities to support the dynamic optimization of feed antenna-TU scheduling and metasurface beamforming parameters.

\subsection{Channel Model}
In the signal transmission process, the signal emitted by each feed antenna first propagates through a near-field channel to the metasurface, undergoes wave-domain beamforming via metasurface, and is then transmitted to TUs through the satellite-terrestrial channel. Accounting for these cascaded propagation and beamforming stages, the equivalent channel coefficient from the $n$-th feed antenna of the $s$-th satellite to the $k$-th TU is formulated as
\begin{align}
	h^{\text{equ}}_{s,n,k}(t) = \mathbf{h}^{\text{H}}_{s,k}(t)\boldsymbol{\Theta}_{s}\mathbf{g}_{s,n} \in \mathbb{C}^{1\times 1},
\end{align}
where $\mathbf{g}_{s,n}\in \mathbb{C}^{L \times 1}$ represents the near-field channel vector between the $n$-th feed antenna and the metasurface of the $s$-th satellite, $\mathbf{h}_{s,k}(t)\in \mathbb{C}^{L \times 1}$ denotes the satellite-terrestrial channel vector between the metasurface of the $s$-th satellite and the $k$-th TU, and $\boldsymbol{\Theta}_{s} \in \mathbb{C}^{L \times L}$ denotes the beamforming matrix of the metasurface of the $s$-th satellite. In particular, $\boldsymbol{\Theta}_{s} = \text{diag}\{\boldsymbol{\varphi}_{s}\}$, where $\boldsymbol{\varphi}_{s} = [e^{j\theta_{s,l}},e^{j\theta_{s,2}},\cdots,e^{j\theta_{s,L}}] \in \mathbb{C}^{1 \times L}$ with $\theta_{s,l}$ denoting the adjustable phase shift of the $l$-th element of the $s$-th satellite. It is noted that the metasurface considered in this work operates in a phase-only reconfiguration mode, where each meta-atom adjusts only the phase of the impinging signal without introducing active RF amplification. As a result, the metasurface does not generate additional transmit power, and its energy consumption is primarily associated with low-power control circuitry used for phase tuning \cite{RIS_energy1, RIS_energy2}.

In general, for the near-field channel vector $\mathbf{g}_{s,n}$, it aggregates the transmission coefficients from the $n$-th feed antenna to each of the $L$ meta-atoms on metasurface, which can be formulated as $\mathbf{g}_{s,n} = [g^{1}_{s,n}; \cdots; g^{L}_{s,n}] \in \mathbb{C}^{L \times 1}$ with $g^{l}_{s,n}$ denoting the element-wise transmission coefficient. According to the physical propagation characteristics of the near-field channel, $g^{l}_{s,n}$ can be modeled based on the Rayleigh-Sommerfeld diffraction theory \cite{Rayleigh-Sommerfeld diffraction theory}. Specifically, $g^{l}_{s,n}$ is given by
\begin{align}
 g^{l}_{s,n} = \frac{A_{t}\cos \tau^{l}_{s,n}}{r^{l}_{s,n}}\left(\frac{1}{2\pi r_{s,n}^{l} }- j\frac{1}{\lambda}\right)e^{j2\pi r_{s,n}^{l}/\lambda},
\end{align}
where $A_{t}$ denotes the area of each meta-atom, $r_{s,n}^{l}$ denotes the Euclidean distance of signal propagation, $\tau_{s,n}^{l}$ denotes the angle bewteen the signal propagation direction and the normal direction of the metasurface, and $\lambda$ denotes the wave-length\footnote{ It should be noted that realistic hardware implementations may introduce discrepancies in the transmission parameters between feed antennas and the metasurface due to manufacturing tolerances. Nevertheless, such deviations can be effectively compensated through calibration procedures employing conventional error back-propagation techniques \cite{hardware imperfect}. Therefore, the near-field channel between the feed antennas and the metasurface is assumed to be perfectly known in this work, serving as a deterministic and calibrated system component for the subsequent communication-level optimization.}.

Moreover, for the satellite-terrestrial channel vector $\mathbf{h}_{s,k}(t)$ at time slot $t$, it is formulated as
\begin{align}
	\mathbf{h}_{s,k}(t) = &\sum_{p=0}^{N_{s,k}-1}\eta_{s,k,p}\cdot \exp\{j2\pi(t\nu_{s,k,p}-f_{c}\tau_{s,k,p})\} \notag \\ &\cdot \mathbf{a} ( \theta_{s,k,p},\psi_{s,k,p}),
\end{align}
where $N_{s,k}$ represents the number of paths from the $s$-th satellite to the $k$-th TU, $f_c$ represents the carrier frequency, and $\eta_{s,k,p}$, $\nu_{s,k,p}$, and $\tau_{s,k,p}$ represent the complex channel gain, Doppler shift, and propagation delay with respect to the $p$-th path, respectively. Moreover, $\mathbf{a}(\theta_{s,k,p},\psi_{s,k,p})$ represents the UPA response of the $p$-th path of the $s$-th satellite's metasurface. The satellite-terrestrial channel consists of a dominant line of sight (LoS) component (index $p=0$) and multiple non-line of sight (NLoS) components (index $p=1,\cdots, N_{s,k}-1$). The NLoS component consists of multiple scattered paths originating from ground reflections and local scatterers. Due to the large altitude of the satellite relative to terrestrial objects, these multipath components are observed within a narrow angular spread at the satellite side. Consequently, their angles of arrival are modeled as small perturbations around the nominal LoS direction. Specifically, the array response of each NLoS path can be expressed as
$\mathbf{a}(\theta_{s,k,0} + \delta_{s,k,p}, \psi_{s,k,0} + \xi_{s,k,p})$, where $\delta_{s,k,p}$ and $\xi_{s,k,p}$ denote small angular deviations. In particular, the UPA response vertor $\mathbf{a}(\theta,\psi)$ can be formulated as
\begin{align}
	\mathbf{a}(\theta,\psi) =&\frac{1}{\sqrt{L_x}} [1;e^{j\varpi\cos(\theta)\sin(\psi)};\cdots;e^{j\varpi(L_x - 1)\cos(\theta)\sin(\psi)}]\notag \\ & \otimes \frac{1}{\sqrt{L_y}}[1;e^{j\varpi\cos(\psi)};\cdots;  e^{j\varpi(L_y - 1)\cos(\psi)}],
\end{align}
where $\varpi=2\pi f_c \frac{d}{c}$ with $c$ being the light velocity and $d$ being the inter-element spacing of metasurface.

Due to fast mobility, LEO satellite networks experience significantly larger Doppler shifts. For the Doppler shift $\nu_{s,k,p}$, it includes two independent components, i.e., $\nu_{s,k,p} = \nu_{s,k,p}^{\text{sat}}+\nu_{s,k,p}^{\text{tu}}$, where $\nu_{s,k,p}^{\text{sat}}$ and $\nu_{s,k,p}^{\text{tu}}$ denote the Doppler shift caused by the motion of satellite and TU, respectively. Considering the high altitude of the satellite orbit, the Doppler shift $\nu_{s,k,p}^{\text{sat}}$ caused by satellite motion are assumed to be path-independent, i.e., $\nu_{s,k,p}^{\text{sat}} = \nu_{s,k}^{\text{sat}}, \forall p$. Whereas, the component $\nu_{s,k,p}^{\text{tu}}$ arising from motion of the TU typically varies in different propagation paths and are much smaller than that brought by satellite motion, contributing to the Doppler spread $\Delta \nu = \max_{p}\nu_{s,k,p}^{\text{tu}} - \min_{p}\nu_{s,k,p}^{\text{tu}}$. In addition, the propagation delay bewteen the $s$-th LEO satellite and the $k$-th TU relative to the $l$-th path mainly composed of two components, i.e., $\tau_{s,k,p} = \tau^{\text{sat}}_{s,k} + \tau^{\text{tu}}_{s,k,p}$, where $\tau^{\text{sat}}_{s,k}$ is the deterministic free-space delay component, determined by the line-of-sight (LoS) distance between the $s$-th satellite and $k$-th TU. The second component $\tau^{\text{tu}}_{s,k,p}$ represents the stochastic scattering-induced delay caused by local multipath around the TU, which does contribute to delay spread effect through $\Delta \tau = \max_{p}\tau_{s,k,p}^{\text{tu}} - \min_{p}\tau_{s,k,p}^{\text{tu}}$. Since the delay spread $\Delta \tau$ is significantly smaller than the bulk delay, as demonstrated in \cite{Delay spread 1} and \cite{Delay spread 2}, we assume the delay remains constant across all paths and omit the path index $p$, i.e., $\tau_{s,k,p} = \tau_{s,k}, \forall p$ \cite{Doppler spread and Delay spread}.

\subsection{Signal Model}
The MA architecture avoids the use of digital precoding, resulting in each feed antenna serving a single TU with one dedicated data stream. This operational constraint necessitates careful scheduling between feed antennas and TUs to maximize the system performance. For the scheduling bewteen TUs and feed antennas of LEO satellites, we introduce $\alpha_{s,n,k}$ to represent the scheduling bewteen the feed antenna and TU, where $\alpha_{s,n,k} = 1$ denotes the $k$-th TU scheduled with the $n$-th feed antenna of the $s$-th satellite, and $\alpha_{s,n,k} = 0$ otherwise.

In this case, the time-domain signal transmitted by the $n$-th feed antenna of the $s$-th satellite can be formulated as
\begin{align}
	e_{s,n}(t) = p_{s,n}\sum_{k \in \mathcal{U}_{s}} \alpha_{s,n,k}x_{k}(t),
\end{align}
where $p_{s,n}$ represents the transmit power allocated to the $n$-th feed antenna of the $s$-th satellite, $x_{k}$ represents the data symbol intended for the $k$-th TU, which is assumed to be independent and satisfies $\mathbb{E}(\vert x_{k}x_{k}^{*}\vert) = 1$ and $\mathbb{E}(\vert x_{k}x_{k'}^{*}\vert) = 0, k\neq k'$. Due to the full frequency multiplexing, each TU experiences inter-satellite interference originating from all visible satellites within its LoS visible range\footnote{The interference from non-visible satellites of TUs is considered negligible due to combined effects of strong NLOS signal attenuation, inherent beamforming isolation, and frequency reuse constraints in LEO satellite constellations \cite{NLoS interference 1}, \cite{NLOS interference 2}.}. Under this context, the singal received by the $k$-th TU can be represented as
\begin{align}
	y_{k}(t) = &\sum_{s\in\mathcal{S}_{k}}\sum_{n=1}^{N_t} h^{\text{equ}}_{s,n,k}(t) p_{s,n}\sum_{i \in \mathcal{U}_{s}} \alpha_{s,n,i}x_{i}(t) + n_{k}(t)\notag\\
	= &\underbrace{\sum_{s\in \mathcal{S}_{k}}\sum_{n=1}^{N_t}\alpha_{s,n,k}p_{s,n}h^{\text{equ}}_{s,n,k}(t)x_{k}(t)}_{\text{desired signal}} \notag \\&+ \underbrace{\sum_{s\in \mathcal{S}_{k}}\sum_{n=1}^{N_t}\sum_{k'\neq k}\alpha_{s,n,k'}p_{s,n}h^{\text{equ}}_{s,n,k}(t)x_{k'}(t)}_{\text{interfrence}} \notag \\&+ \underbrace{n_{k}(t)}_{\text{AWGN}},
\end{align}
where $n_{k}$ is additive white Gaussian noise (AWGN) with zero mean and variance $\sigma_{k}^2$. In particular, $\sigma_{k}^2 = B \zeta \kappa$ with $B$, $\zeta$, and $\kappa$ denoting the channel bandwidth, noise temperature and Boltzmann constant, respectively. Given the characteristics of satellite-terrestrial channels under high-speed movement, Doppler shift and propagation delay compensation are implemented at the TU side. The Doppler shift and propagation delay are derived from the satellite and TU positions, which is obtained through satellite ephemeris data and global navigation satellite system measurements, respectively \cite{Doppler shift estimation}. Since the channel is dominanted by the LoS link, the received signal is compensated using the LoS path-specific Doppler shift and time delay. For the $k$-th TU, considering its Doppler shift and propagation delay compensation coefficients relative to the scheduling $\alpha_{s,n,k}$, the compensation coefficients are given by
\begin{align}
	\nu_{k}^{\text{com}} &= \sum\limits_{s\in\mathcal{S}_{k}}\sum\limits_{n=1}^{N_t}\alpha_{s,n,k}(\nu_{s,k}^{\text{sat}}+\nu_{s,k,1}^{\text{tu}}),
\end{align}
and
\begin{align}
	\tau_{k}^{\text{com}} &= \sum\limits_{s\in\mathcal{S}_{k}}\sum\limits_{n=1}^{N_t}\alpha_{s,n,k}\tau_{s,k}.
\end{align}
In this context, the compensated signal can be formulated as
\begin{align}
	y^{\text{com}}_{k}(t) =& y_{k}(t + \tau^{\text{com}}_{k}) \cdot \text{exp}\{-j2\pi(t+\tau^{\text{com}}_{k})\nu_{k}^{\text{com}}\}\notag\\
	= &\sum_{s\in\mathcal{S}_{k}}\sum_{n=1}^{N_t}\alpha_{s,n,k}p_{s,n}\hat{h}_{s,n,k}^{\text{equ}}(t) x_{k}(t)\notag\\
	& + \sum_{s\in \mathcal{S}_{k}}\sum_{n=1}^{N_t}\sum_{k'\neq k}\alpha_{s,n,k'}p_{s,n} \hat{h}_{s,n,k}^{\text{equ}}(t) x_{k'}(t) \notag \\&+ n_{k}(t),
\end{align}
where $\hat{h}^{\text{equ}}_{s,n,k}(t) = \hat{\mathbf{h}}^{\text{H}}_{s,k}(t)\boldsymbol{\Theta}\mathbf{g}_{s,n}$ represents the compensated equivalent channel. In particular, $\hat{\mathbf{h}}_{s,k}(t)$ is the compensated satellite-terrestrial channel, which is given by
\begin{align}
	\hat{\mathbf{h}}_{s,k}(t) = &\sum_{p=0}^{N_{s,k}-1}\hat{\eta}_{s,k,p} \cdot
	\exp\{j2\pi(t\nu_{s,k,p}^{\text{com}} - f_{c} \tau_{s,k}^{\text{com}})\} \notag \\ &\cdot \mathbf{a} ( \theta_{s,k},\psi_{s,k}),
\end{align}
where $\hat{\eta}_{s,k,p} = \eta_{s,k,p}\cdot \exp\{j2\pi\tau_{k}^{\text{com}}\nu_{s,k,p}^{\text{tu}}\}$,  $\nu_{s,k,p}^{\text{com}} = \nu_{s,k,p} - \nu_{k}^{\text{com}}$ and $\tau_{s,k}^{\text{com}} = \tau_{s,k} - \tau_{k}^{\text{com}}$ represents the compensated frequency shift and propagation delay, respectively.

In practical LEO SATCOM systems, perfect CSI is difficult to obtain due to channel estimation inaccuracies, residual scattering dynamics, and hardware imperfections. In this work, Doppler shift and propagation delay are assumed to be deterministically compensated using satellite ephemeris data and global navigation satellite system (GNSS) measurements prior to channel estimation. As a result, the remaining CSI uncertainty primarily affects the Doppler/delay compensated channel. Accordingly, the compensated satellite--terrestrial channel is modeled as
\begin{equation}
	\hat{\mathbf{h}}_{s,k}(t) = \tilde{\mathbf{h}}_{s,k}(t) + \mathbf{e}_{s,k}(t),
	\label{eq:channel_error_model}
\end{equation}
where $\tilde{\mathbf{h}}_{s,k}(t) \in \mathbb{C}^{L \times 1}$ denotes the estimated compensated channel available at the satellite, and $\mathbf{e}_{s,k}(t)$ represents the residual channel estimation error. The error vector is modeled as a circularly symmetric complex Gaussian random vector
\begin{equation}
	\mathbf{e}_{s,k}(t) \sim \mathcal{CN}\left(\mathbf{0}, \sigma_{e}^{2} \mathbf{I}\right),
\label{eq:error_distribution}
\end{equation}
where $\sigma_{e}^{2}$ characterizes the error variance. The channel error variance is modeled as inversely proportional to the singal to noise ratio (SNR), i.e., $\sigma_e^2 = 1/\mathrm{SNR}$, which captures the dependence of estimation accuracy on the received signal quality \cite{sigmae}. Under this model, the equivalent cascaded channel between the $n$-th feed antenna of the $s$-th satellite and the $k$-th TU becomes
\begin{equation}
	\hat{h}^{\mathrm{equ}}_{s,n,k}(t) = \left(\tilde{\mathbf{h}}_{s,k}(t) + \mathbf{e}_{s,k}		(t)\right)^{H} \Theta_s \mathbf{g}_{s,n}.
	\label{eq:equ_channel_error}
\end{equation}

For the link from the scheduled $s$-th satellite and the $k$-th TU, i.e., $\alpha_{s,n,k} = 1$, the residenual Doppler shift $\nu_{s,k,p}^{\text{com}} = \nu_{s,k,p}^{\text{tu}}$ and delay $\tau_{s,k}^{\text{com}} = \tau_{s,k}^\text{tu}$ are negligible and are acceptable for the reliable transmission \cite{Residual Doppler}. Nevertheless, the signal from the other satellites are still asynchronous at the $k$-th TU with a significant Doppler shift and delay. In addition, the compensated channel is affected by estimation errors, which introduces uncertainty in the received signal power. Under this context, the signal to interference plus noise ratio (SINR) at the $k$-th TU is characterized in an average sense and can be
expressed as~\cite{SINR with asynchronous}
\begin{align}
    \mathrm{SINR}_k = \frac{ \sum\limits_{s \in \mathcal{S}_k} \sum\limits_{n=1}^{N_t} \alpha_{s,n,k} p_{s,n} \mathbb{E}\!\left[ \left| \hat{h}^{\mathrm{equ}}_{s,n,k} \right|^2 \right]}{\sum\limits_{s \in \mathcal{S}_k} \sum\limits_{n=1}^{N_t} \sum\limits_{k' \neq k} \alpha_{s,n,k'} p_{s,n} \rho_{s,k} \mathbb{E}\!\left[\left| \hat{h}^{\mathrm{equ}}_{s,n,k} \right|^2 \right] + \sigma_k^2 }, \label{SINR}
\end{align}
where
\begin{align}
    \mathbb{E} \left[\left| \hat{h}^{\mathrm{equ}}_{s,n,k} \right|^2 \right] = \left|       \tilde{\mathbf{h}}_{s,k}^H \Theta_s \mathbf{g}_{s,n} \right|^2 + \sigma_e^2 \left\| \Theta_s    \mathbf{g}_{s,n} \right\|^2
\end{align}
and $\rho_{s,k}$ represents the asynchronous interference signal power coefficient of the $s$-th satellite to the $k$-th TU, which is formulated as \cite{asynchronous factor}
\begin{align}\label{singal power coefficient}
	\rho_{s,k} = \left(1-\frac{\beta}{4} + \frac{\beta}{4}\cdot\cos(2\pi\tau_{s,k}^{\text{com}})\right),
\end{align}
where $\beta$ represents the roll-off factor of the normalized root raised consine function. Thus, the achievable rate $R_k$ evaluated based on the expected SINR under channel estimation errors, which is given by
\begin{align}\label{achievable rate}
	R_{k} = \log_2({1+\text{SINR}_{k}}).
\end{align}

It is evident that the $k$-th TU's achievable rate is jointly impacted by the scheduling and metasurface beamforming from the expression of SINR (\ref{SINR}). Specifically, the scheduling determines which satellite feed antennas serve each TU, it optimizes the use of strong combined channel gains, thereby boosting the desired signal power. In addition, metasurface directly shapes equivalent channel \(\hat{h}^{\text{equ}}_{s,n,k}(t,f_c)\). By concentrating desired signals onto target TU to enhance signal power and attenuating channel gain for non-target TU to suppress interference, the SINR of TUs are elevated. Given this coordinated effect, joint design of scheduling and beamforming is anticipated to effectively enhance the LEO satellite constellation communication quality.

\subsection{Problem Formulation}
In this paper, our target is to maximize the WSR of the TUs by optimizing the beamforming of metasurface and scheduling bewteen the feed antenna and TUs. Therefore, the optimization problem is mathematically formulated as
\begin{subequations}\label{OP0}
	\begin{align}
		\text{(P1):}~ \max_{\boldsymbol{\Theta}, \boldsymbol{\alpha}} ~&\sum_{k=1}^{K}\beta_{k}R_{k} \label{OP0_obj} \\
		\text{s.t.}
		~&\vert \left[\boldsymbol{\Theta}_{s,s} \right]_{l,l} \vert = 1, \forall s, \forall l,\label{OP0_c1}\\
		~&\sum_{k}\alpha_{s,n,k} \leq 1, \forall s,\forall n,\label{OP0_c2}  \\
		~&\sum_{s\in\mathcal{S}_{k}}\sum_{n=1}^{N_t}\alpha_{s,n,k} = 1, \forall k,\label{OP0_c3} \\
		~&\alpha_{s,n,k} \in \{0, 1\},\label{OP0_c4}
	\end{align}
\end{subequations}
where $\beta_{k}\in [0,1]$ represents the weight coefficient of the $k$-th TU, $\boldsymbol{\alpha} = \{\alpha_{s,n,k}|\forall s, \forall  n, \forall k\}$ represents the set of the scheduling factor, and $\boldsymbol{\Theta}=\{\boldsymbol{\Theta_{s}}|\forall s\}$ represents the set of the metasurface beamforming. The objective function (\ref{OP0_obj}) denotes the WSR of the TUs. Constraint (\ref{OP0_c1}) denotes the modulus constraint on the reflection coefficients of the metasurfaces. Constraint (\ref{OP0_c2}) denotes that there is no more than one TU served by each feed antenna. Constraint (\ref{OP0_c3}) denotes that each TU must be served by a feed antenna belonging to one of its visible satellite $\mathcal{S}_{k}$. Constraint (\ref{OP0_c4}) denotes the 0-1 constraint of the scheduling factor. The problem (P1) constitutes a MINLP problem, which presents significant computational challenges due to the non-convex objective function and constraints. To this end, we develop an efficient AO algorithm that yields a feasible suboptimal solution, ultimately improving the communication quality.

\section{Joint Design of Scheduling and Beamforming}
In this section, we focus on the design of a joint scheduling and beamforming algorithm for MA-enabled LEO satellite constellation communications by solving the optimization problem (P1). Firstly, problem (P1) is decomposed into two subproblems, i.e., the scheduling of feed antennas and TUs subproblem and the metasurface beamforming optimization subproblem. Then, the subproblems are solved iteratively until convergence. In particular, the mixed-integer nonlinear scheduling subproblem is solved using a novel MCMF-based algorithm, while the beamforming optimization subproblem is handled through the SDR technique plus penalty-function. In the following, we introduce the detailed algorithms for these two subproblems, respectively.

\subsection{Scheduling of Feed Antenna and TUs}
With a fixed metasurface beamforming $\boldsymbol{\Theta}$, the scheduling of feed antenna and TUs subproblem can be expressed as
\begin{subequations}\label{OP1}
	\begin{align}
		\text{(P2):}~\max_{\boldsymbol{\alpha}} ~&\sum_{k=1}^{K}\beta_{k}R_{k}\label{OP1_obj} \\
		\text{s.t.} ~&\text{(\ref{OP0_c2})-(\ref{OP0_c4})}.
	\end{align}
\end{subequations}
The optimal scheduling between feed antennas and TUs constitutes a complex combinatorial optimization problem. While exhaustive search could theoretically evaluate all $A_{SN_t}^{K}$ possible assignments to identify the global optimum, the factorial growth of solution space renders this brute-force approach computationally intractable for practical system dimensions. Fortunately, it is worth noting that the scheduling problem inherently features a one-to-one matching constraint, where each feed antenna can serve at most one TU at a time, and each TU requires only one feed antenna to meet its communication needs. This constraint aligns well with the core logic of the MCMF problem, which excels at allocating flow between two node sets under strict capacity limits. Further, the scheduling objective of maximizing the WSR can be seamlessly adapted to MCMF by mapping the rate contribution of each feed antenna-TU pair to the negative cost of the corresponding edge in the flow network. Therefore, we develop an efficient polynomial-time solution based on the MCMF algorithm to address scheduling of feed antenna and TUs subproblem.

\subsubsection{MCMF Problem}
Given a directed flow network $G = (V, E, c, w)$ where $V$ represents the node set and $E$ denotes the arc set. Each directed edge $(v_{i},v_{j}) \in E$ has an associated capacity $c_{ij} \geq 0$ and cost coefficient $w_{ij}$. The MCMF problem aims to find a flow $f$ that maximizes the total flow while minimizing the transportation cost. The mathematical formulation is expressed as:
\begin{subequations}\label{MCMF}
	\begin{align}
		\min_{f}~&\sum_{(v_{i},v_{j})\in E}w_{ij}f(v_{i}, v_{j})\label{MCMF_obj}\\
		\text{s.t.} ~&0\leq f(v_i,v_j)\leq c_{ij},\label{MCMF_c1}\\
		&\sum_{v_{i}\in V-v_{j}}f(v_i,v_j) = \sum_{v_{j}\in E-v{_{i}}}f(v_{j},v_{i}),\label{MCMF_c2}\\
		&\vert f \vert = \sum_{v_{i}\in V-v_{s}}f(v_{s},v_{i}) = \sum_{v_{j}\in V-v_{t}}f(v_{j},v_{t}).\label{MCMF_c3}
	\end{align}
\end{subequations}
The above constraints describe the necessary conditions that the feasible flow $f$ must satisfy. The objective function (\ref{MCMF_obj}) represents the minimum cost of the flow $f$, where $f(v_{i}, v_{j})$ denotes the traffic passing through edge $(v_{i},v_{j})$. Constraint (\ref{MCMF_c1}) states that the flow must respect the capacity of the edge. Constraint (\ref{MCMF_c2}) represents the limit of zero traffic storage capacity for any intermediate node within the network. Constraint (\ref{MCMF_c3}) denotes that the total outgoing flow from the source $v_s$ equals the incoming flow at the sink $v_t$.

Based on this, we construct a flow network $G(U, V, E)$ to model the scheduling problem between feed antennas and TUs, as illustrated in Fig. 2. In this network, each TU is represented by a node $u_k \in U$ and each feed antenna is denoted by a node $v_j \in V$. The source node $v_s$ is connected to every TU node $u_k$ with an edge with capacity 1, while all antenna nodes $v_j$ are linked to the sink node $v_t$ through edges with capacity 1. For the bipartite connections, directed edges with capacity 1 are established between each TU node $u_k$ and its visible feed antenna nodes $v_j$ (where $v_j$ belongs to the satellites visible to the $k$-th TU). The resulting flow network directly corresponds to a standard maximum flow problem.
\begin{figure}[!ht]
	\centering
	\includegraphics[width=3.4in]{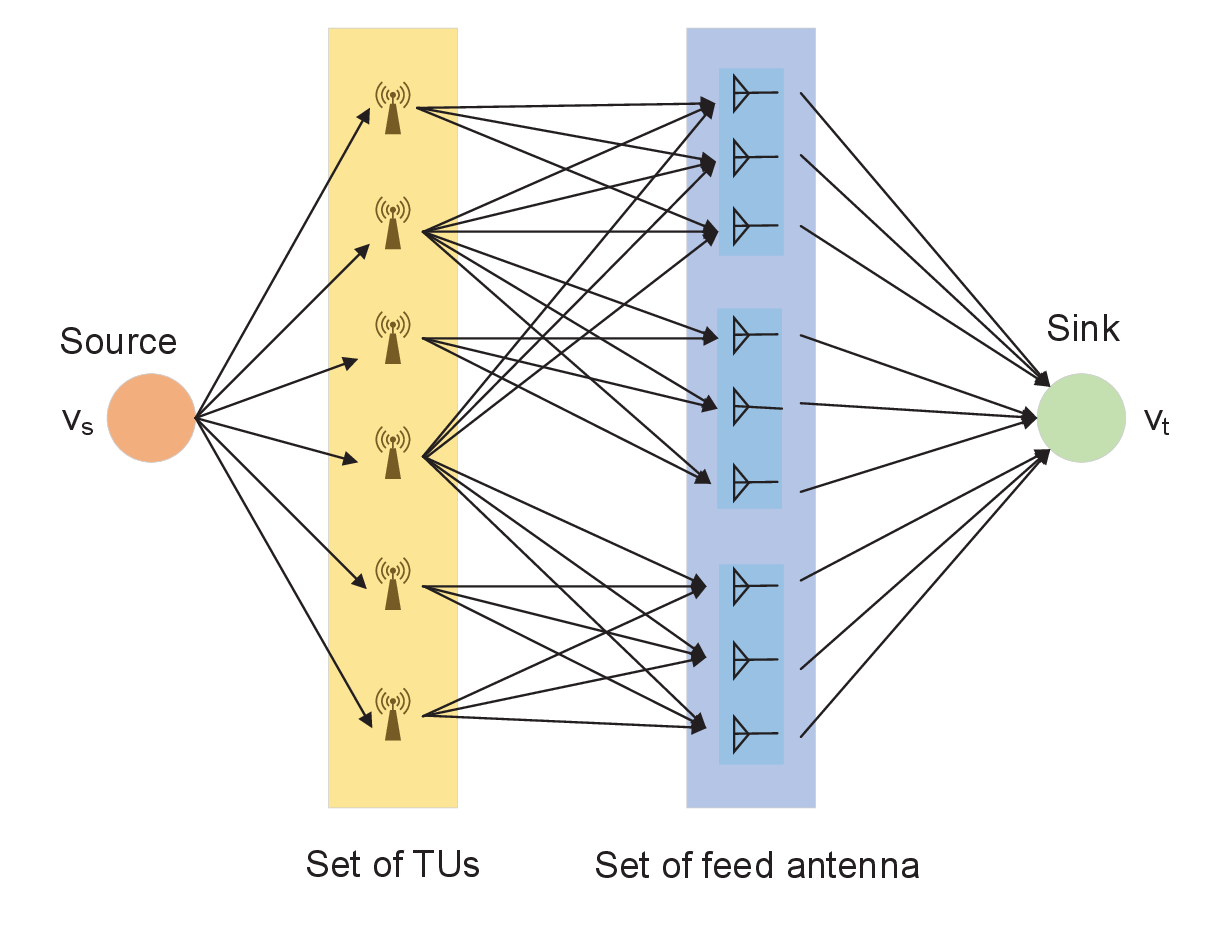}
	\caption{Constructed flow network for the scheduling problem}
	\label{Constructed flow}
\end{figure}

However, it is difficult to define the cost of the directed edges between the TU nodes and the feed antenna nodes and thus translate it into a MCMF problem. The inherent difficulty stems from the interdependent nature of the SINR expression of TUs, where the SINR of the $k$-th TU depends nonlinearly on both its serving feed antenna and interference from other active feed antennas. In other words, the set of active feed antennas itself is not pre-fixed, this means the sources and magnitudes of interference affecting the $k$-th TU's SINR cannot be determined independently of the overall feed antenna selection. To address this, we establish a theoretically rigorous lower bound on the achievable rate, corresponding to the worst-case that all other feed antennas are active and contribute to interference, which is given by
\begin{align}\label{lower_of_achievable_rate}
&R_{k} \geq \bar{R}_{k} = \log_2\left(1+\tilde{\Gamma}_{k}\right) \notag \\
&= \log_2\!\left( 1+ \frac{ \sum\limits_{s\in \mathcal{S}_{k}} \sum\limits_{n=1}^{N_t} \alpha_{s,n,k} p_{s,n} \mathbb{E}\!\left[ \left| \hat{h}^{\mathrm{equ}}_{s,n,k}(t) \right|^2 \right]
}{\sum\limits_{s\in \mathcal{S}_{k}} \sum\limits_{\substack{n=1 \\ \alpha_{s,n,k}\neq 1}}^{N_t} p_{s,n} \rho_{s,k} \mathbb{E}\!\left[ \left| \hat{h}^{\mathrm{equ}}_{s,n,k}(t) \right|^2 \right] + \sigma_k^2 } \right).
\end{align}
In (\ref{lower_of_achievable_rate}), the lower bound of achievable rate of the $k$-th TU is solely depends on the scheduling factor $\alpha_{s,n,k}$ of the $k$-th TU and $n$-th feed antenna. Thus, we define the edge cost between TU node $u_{k}$ and feed antenna node $v_{j}$ (with $j = (s-1)N_t+n$) as $-\beta_{k}\bar{R}_{j,k}$. Due to channel estimation errors in the compensated channel, the achievable rate is lower bounded using the expected SINR, which yields a deterministic metric for scheduling. Accordingly, the scheduling problem is reformulated as
\begin{subequations}\label{OP1_lower_bound}
	\begin{align}
		\text{(P2-1):}~\min_{\boldsymbol{\alpha}} ~&\sum_{k=1}^{K}-\beta_{k}\bar{R}_{k} \\
		\text{s.t.} ~&\text{(\ref{OP0_c2})-(\ref{OP0_c4})}.
	\end{align}
\end{subequations}
Eventually, by solving the MCMF problem in the constructed flow network, we obtain an optimal solution to the optimization problem (P2-1), which is proved in Theorem 1. The detailed of the MCMF-based scheduling algorithm is provided in Algorithm 1.

\begin{algorithm}
\renewcommand{\algorithmicrequire}{\textbf{Input:}}
\renewcommand{\algorithmicensure}{\textbf{Output:}}
\caption{MCMF-based Algorithm for Scheduling of feed antenna and TUs}
\label{alg:unified_mcmf}
\begin{algorithmic}[1]
    \REQUIRE
        Flow network $G = (U, V, E, c, w)$ with:
        \STATE \quad $U$: Set of TUs;
        \STATE \quad $V$: Set of feed antennas;
        \STATE \quad $c(u,v)$: Edge capacities;
        \STATE \quad $w(u,v)$: Edge costs ($-\beta_{k}\bar{R}_{l,k}$);
        \STATE \quad Source $v_s$, sink $v_t$;
    \ENSURE Optimal scheduling matrix $\boldsymbol{\alpha}$;

    \STATE \textbf{Initialize:}
    \STATE $\boldsymbol{\alpha} \gets \mathbf{0}_{|U|\times|V|}$, $f \gets \mathbf{0}_{|E|}$, $\text{total\_cost} \gets 0$

    \WHILE{true}
        \STATE \textbf{Step 1: Build Residual Network $G_f$:}
        \FOR{$(u,v) \in E$}
            \IF{$f(u,v) < c(u,v)$}
                \STATE $E_f \gets E_f \cup \{(u,v)\}$ with $w_f(u,v) = w(u,v)$;
            \ENDIF
            \IF{$f(u,v) > 0$}
                \STATE $E_f \gets E_f \cup \{(v,u)\}$ with $w_f(v,u) = -w(u,v)$;
            \ENDIF
        \ENDFOR

        \STATE \textbf{Step 2: Find Min-Cost Path via shortest path faster algorithm (SPFA):}
        \STATE $p, \text{dist} \gets \text{SPFA}(G_f, v_s, v_t)$; \COMMENT{Using shortest path faster algorithm (SPFA) \cite{SPFA}}
        \IF{$p = \emptyset$}
            \STATE Break; \COMMENT{No augmenting path exists.}
        \ENDIF

        \STATE \textbf{Step3: Compute Bottleneck Capacity on path $p$:}
        \STATE $\delta \gets \min\{c_f(u,v) | (u,v) \in p\}$;
        \STATE \textbf{Step 4: Augment Flow:}
        \FOR{$(u,v) \in p$}
            \STATE $f(u,v) \gets f(u,v) + \delta$;
            \STATE $f(v,u) \gets f(v,u) - \delta$;
            \STATE $\text{total\_cost} \gets \text{total\_cost} + \delta \cdot w_f(u,v)$;
        \ENDFOR
    \ENDWHILE

    \STATE \textbf{Step 5: Derive Assignment:}
    \FOR{$u_k \in U, v_j \in V$}
        \STATE $\alpha_{s,n,k} \gets \mathbb{I}[f(u_k,v_j) = 1]$, $j = (s-1)N_t+n$;
    \ENDFOR

    \RETURN $\boldsymbol{\alpha}$.
\end{algorithmic}
\end{algorithm}

\begin{theorem}
The MCMF-based scheduling algorithm produces an optimal scheduling of TUs to feed antennas that maximizes the lower bound of the WSR while satisfying all system constraints.
\end{theorem}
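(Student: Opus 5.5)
The plan is to show that (P2-1) and the min-cost max-flow problem on the constructed network $G$ have the same feasible set and the same objective values. Then I will use the standard optimality theory of successive shortest augmenting paths.

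\emph{Step 1 (separability).} Under the worst-case interference assumption, the denominator of $\tilde{\Gamma}_k$ in (\ref{lower_of_achievable_rate}) counts every feed antenna of $\mathcal{S}_k$ except the serving one. It therefore does not depend on how other TUs are scheduled. By (\ref{OP0_c3}), exactly one $\alpha_{s,n,k}=1$ for each $k$. Hence $\bar{R}_k=\sum_{s,n}\alpha_{s,n,k}\bar{R}_{j,k}$ with $j=(s-1)N_t+n$, where $\bar{R}_{j,k}$ is a constant computed with antenna $j$ serving TU $k$. The objective of (P2-1) is thus the linear function $\sum_{k}\sum_{j}\alpha_{j,k}(-\beta_k\bar{R}_{j,k})$. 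This linearization is the crux, and it holds only because the bound decouples the SINR from the other TUs' choices.

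\emph{Step 2 (feasibility correspondence).} Let $\alpha$ be feasible for (\ref{OP0_c2})--(\ref{OP0_c4}). Send one unit along $v_s\to u_k\to v_j\to v_t$ for each pair with $\alpha_{j,k}=1$. This flow respects the unit capacities, since each TU and each antenna is used at most once, and it conserves flow at every intermediate node. Its value is $K$, which is the maximum possible because the source cut has capacity $K$. Its cost equals the objective of (P2-1).

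Conversely, take any integral flow of value $K$. Every source edge is saturated, and each $u_k$ sends exactly one unit to a visible antenna, which gives (\ref{OP0_c3}). Each $v_j$ receives at most one unit because of its sink edge, which gives (\ref{OP0_c2}). Setting $\alpha_{j,k}=f(u_k,v_j)$ therefore yields a feasible schedule with equal cost. I assume (P2-1) is feasible, i.e. a $K$-matching exists; otherwise the max flow is below $K$ and no schedule exists either. Under this assumption, maximum flows of $G$ correspond exactly to feasible schedules.

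\emph{Step 3 (algorithm optimality and integrality).} Algorithm 1 is the successive shortest path method. It augments along min-cost $v_s$--$v_t$ paths in the residual graph, with SPFA handling the negative arc costs. All capacities are integers, so each bottleneck $\delta$ is $1$ and every intermediate flow is integral; this guarantees that Step 5 returns binary $\alpha$. The initial zero flow is acyclic in a DAG, so its residual graph has no negative cycles. The key invariant is the standard lemma that augmenting a min-cost flow of value $v$ along a shortest residual path yields a min-cost flow of value $v+1$. I would cite this, or sketch it via the negative-cycle optimality criterion: the difference between the new flow and any other flow of value $v+1$ decomposes into one path plus cycles, each of nonnegative cost. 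By induction, the flow at termination has maximum value $K$ and minimum cost among all maximum flows. Termination occurs after at most $K$ augmentations, and each SPFA run is polynomial, which gives the polynomial complexity.

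Combining the three steps, the returned $\alpha$ minimizes $\sum_k-\beta_k\bar{R}_k$, so it maximizes the WSR lower bound $\sum_k\beta_k\bar{R}_k$ subject to (\ref{OP0_c2})--(\ref{OP0_c4}). The main obstacle is Step 1. I must argue carefully that the per-edge cost $-\beta_k\bar{R}_{j,k}$ is well defined independently of all other assignments, and that the sum of these costs reproduces the objective exactly rather than only approximately. The remaining steps are standard network-flow facts.
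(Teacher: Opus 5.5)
Your proposal is correct and follows essentially the same route as the paper. Both proofs use the fact that $\bar{R}_k$ depends only on TU $k$'s own assignment to build a cost-preserving correspondence between feasible schedules and integral maximum flows of value $K$, so that a min-cost maximum flow yields a schedule maximizing the WSR lower bound (the paper phrases this last step as a contradiction). Your version is more careful than the paper's: you match the source and sink capacities to the correct scheduling constraints, which the paper swaps, you state the feasibility assumption explicitly, and you use the successive-shortest-path invariant to justify that Algorithm 1 really returns an integral min-cost maximum flow, which the paper takes for granted. One small fix in your sketch of that lemma: the one-path-plus-nonnegative-cycles decomposition applies to the difference between the competing flow of value $v+1$ and the \emph{old} flow of value $v$, taken in the old residual graph.
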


\begin{proof}
The proof consists of two parts:
\begin{enumerate}
    \item \textbf{Equivalence to MCMF Problem:}
    The integral capacities ensure $f(u_k,v_j) \in \{0,1\}$, yielding an integral solution. Since the capacity between the source node $v_{s}$ to the TU node $u_{k}$ is 1, we have $\sum_{j} f(u_k,v_j) = 1$, which is equal to the constraint (\ref{OP0_c2}). Additional, the capacity bewteen the feed antenna node $v_{j}$ to the sink node is 1, we have $\sum_{i} f(u_k,v_j) \leq 1$, which is equal to the constraint (\ref{OP0_c3}). Hence, a maximum flow $f$ in the constructed graph corresponds to a feasible solution in problem (P1-1).
    \item \textbf{Optimality Guarantee:}
    The problem (P2-1) necessarily has an optimal solution, corresponding to the MCMF in the constructed network. Suppose there exists a better solution $\boldsymbol{\alpha}^*$ with higher lower bound of WSR than the MCMF solution $\boldsymbol{\alpha}^\circ$. Then, we can construct a flow $f^*$ with $f(u_{k},v_{j}) = \alpha^*_{s,n,k}, \forall j = (s-1)N_t+n$. According to the constraint (\ref{OP0_c2}), it would maintian equal total flow $|f^*| = K$ and have lower cost $\sum w_{k,j}f^*(u_{k},v_{j}) < \sum w_{k,j}f^\circ(u_{k},v_{j})$. This contradicts the minimality of $\boldsymbol{\alpha}^\circ$ in the MCMF solution. Therefore, the MCMF solution must be optimal for problem (P2-1).
\end{enumerate}
\end{proof}

\subsection{Beamforming Design of Metasurface}
With the obtained scheduling of feed antennas and TUs, we consider the beamforming design of metasurface. In order to maintain consistency in the optimization problem and ensure convergence, the lower bound of WSR is optimized, leading to the following optimization subproblem
\begin{subequations}
	\begin{align}
	(\text{P3}): \max_{\boldsymbol{\Theta}}~& \sum_{k=1}^{K}\beta_{k}\bar{R}_{k} \label{OP1_obj1}\\
	\text{s.t.} ~& \vert \left[\boldsymbol{\Theta}_{s,s} \right]_{l,l} \vert = 1, \forall s, \forall l. \label{OP1_c1}
	\end{align}
\end{subequations}
To reduce complexity and transform it into a more tractable form, we adopt a linear receive beamforming strategy by introducing receiver $\gamma_k$ for the $k$-th TU. Then, the mean square error (MSE) of the received signal at the $k$-th TU can be expressed as
\begin{align}
    \mathrm{MSE}_k &= \mathbb{E}\left[ \left| \gamma_k y_k^{\mathrm{com}} - x_k \right|^2 \right] \notag \\
    &=  |\gamma_k|^2 \left( \Lambda_k + \sigma_e^2 \sum_{s \in \mathcal{S}_k} \sum_{n=1}^{N_t} p_{s,n}^2 \rho_{s,k} \left\| \Theta_s \mathbf{g}_{s,n} \right\|^2 \right) \notag \\
	& + |\gamma_k \mu_k - 1|^2,
\end{align}
where
\begin{align}\label{mu_def}
    \mu_k = \sum_{s \in \mathcal{S}_k} \sum_{n=1}^{N_t} \alpha_{s,n,k} p_{s,n} \tilde{\mathbf{h}}_{s,k}^H \Theta_s \mathbf{g}_{s,n},
\end{align}
and
\begin{align}\label{lambda_def}
    \Lambda_k = \sum_{s \in \mathcal{S}_k} \sum_{n=1}^{N_t} p_{s,n}^2 \rho_{s,k}
\left|\tilde{\mathbf{h}}_{s,k}^H \Theta_s \mathbf{g}_{s,n} \right|^2 + \sigma_k^2.
\end{align}
By computing the derivative of $\text{MSE}_{k}$ with respect to $\gamma_{k}$, we can obtain the optimal MSE when $\gamma_{k} = \mu_{k}^{*}(\Lambda_k+ \sigma_e^2 \sum_{s \in \mathcal{S}_k} \sum_{n=1}^{N_t} p_{s,n}^2 \rho_{s,k} \left\| \Theta_s \mathbf{g}_{s,n} \right\|^2)^{-1}$, yielding the minimum MSE (MMSE) as $\text{MMSE}_{k} = (1 + \tilde{\Gamma}_{k})^{-1}$ \cite{CSIT}. Consequently, the original lower bound of WSR maximization problem (P3) can be equivalently transformed into
\begin{align}
\text{(P3-1)}:\min_{\boldsymbol{\Theta}, \boldsymbol{\gamma}} ~&\sum_{k=1}^{K}\beta_{k}\log_2(\text{MSE}_{k}) \label{min_mse}\\
\text{s.t.}~& \text{(\ref{OP1_c1})}. \notag
\end{align}
To further simplify the problem, we introduce auxiliary variables $\boldsymbol{\vartheta} = [\vartheta_{1},\cdots,\vartheta_{K}]$ following the WMMSE approach \cite{WMMSE method}, reformulating the problem as
\begin{subequations}
	\begin{align}
	\text{(P3-2)}:
	\min_{\boldsymbol{\Theta},\boldsymbol{\vartheta},\boldsymbol{\gamma}} & \sum_{k=1}^{K}\beta_{k}\left(\vartheta_{k}\text{MSE}_{k} - \log_2(\vartheta_{k}) \right) \label{wmmse_form}\\
	\text{s.t.} & \left|\boldsymbol{\varphi}_{s,m}\right| = 1, \quad \forall s,\forall m. \nonumber
	\end{align}
\end{subequations}
The optimal solution for $\vartheta_{k}$ is obtained by setting the first derivative of (\ref{wmmse_form}) to zero, yielding the closed-form expression $\vartheta_{k}^{\text{opt}} = \text{MSE}_{k}^{-1}$. Given that the optimization variable $\boldsymbol{\Theta}_{s}$ takes a matrix form and is coupled within the channel expression, it becomes challenging to handle under the unit modulus constraint. We expand $\text{MSE}_{k}$ in terms of $\boldsymbol{\varphi}_{s} = \text{vec}\{\boldsymbol{\Theta}_{s}\}$, which has a form of
\begin{align}
\mathrm{MSE}_{k} &= |\gamma_{k}|^2 \sum_{s\in\mathcal{S}_{k}} \sum_{n=1}^{N_t} p_{s,n}^2 \rho_{s,k}^2 \boldsymbol{\varphi}_{s} \Big( \tilde{\mathbf{f}}_{s,n,k} \tilde{\mathbf{f}}_{s,n,k}^{\text{H}} \notag \\ &\quad+ \sigma_e^2 \operatorname{diag} \!\left( \mathbf{g}_{s,n}\mathbf{g}_{s,n}^{\text{H}} \right) \Big) \boldsymbol{\varphi}_{s}^{\text{H}} \nonumber \\ &\quad - 2 \Re\Bigg\{ \gamma_{k} \sum_{s\in\mathcal{S}_{k}} \sum_{n=1}^{N_t} \alpha_{s,n,k} p_{s,n} \rho_{s,k} \boldsymbol{\varphi}_{s} \tilde{\mathbf{f}}_{s,n,k} \Bigg\} \nonumber \\ &\quad + |\gamma_{k}|^2 \sigma_{k}^{2} + 1, \label{eq:robust_global_mse}
\end{align}
where $\tilde{\mathbf{f}}_{s,n,k}=\operatorname{diag}(\mathbf{g}_{s,n}) \tilde{\mathbf{h}}_{s,k}$. After obtaining the optimal $\boldsymbol{\vartheta}$ and $\boldsymbol{\gamma}$, the problem can be reduced as
\begin{subequations}
	\begin{align}
	\text{(P3-3)}: \min_{\boldsymbol{\varphi}_{s}} & \sum_{s=1}^{S}\boldsymbol{\varphi}_{s} \boldsymbol{\Xi}_{s}\boldsymbol{\varphi}_{s}^{\text{H}} + 2\Re\left\{\sum_{s=1}^{S}\boldsymbol{\varphi}_{s}\boldsymbol{\eta}_{s}\right\} \notag\\
	& + \sum_{k=1}^{K}\beta_{k}\vartheta_{k}\left|\gamma_{k}\right|^2\sigma_{k}^2 + 1 \label{eq:reformulated_prob} \\
	\text{s.t.} & \left|\boldsymbol{\varphi}_{s,l}\right| = 1, \quad \forall s,\forall l,
	\end{align}
\end{subequations}
where
\begin{align}
\boldsymbol{\Xi}_s &= \sum_{k \in \mathcal{U}_s} \beta_k \vartheta_k |\gamma_k|^2
\sum_{n=1}^{N_t} p_{s,n}^2 \rho_{s,k}^2 \Big( \tilde{\mathbf{f}}_{s,n,k} \tilde{\mathbf{f}}_{s,n,k}^{H} \notag \\ &\quad + \sigma_e^2\operatorname{diag}\!\left(\mathbf{g}_{s,n}\mathbf{g}_{s,n}^{H}\right) \Big),
\label{eq:Xi_def}
\end{align}
and
\begin{align}
\boldsymbol{\eta}_s &= \sum_{k \in \mathcal{U}_s} \beta_k \vartheta_k \gamma_k \sum_{n=1}^{N_t} \alpha_{s,n,k} p_{s,n} \rho_{s,k} \tilde{\mathbf{f}}_{s,n,k}. \label{eq:eta_def}
\end{align}
However, the problem is still not convex due to the non-convex unit-modulus constraints. To address it, we employ the SDR technique with penalty function methods \cite{SDR method}. First, we introduce the augmented vector $\bar{\boldsymbol{\varphi}}_{s} = [\boldsymbol{\varphi}_{s},1] \in \mathbb{C}^{(L+1)\times1}$ and construct the extended matrix
\begin{equation}
\boldsymbol{\Psi}_{s} =
\begin{bmatrix}
\boldsymbol{\Xi}_{s} & \boldsymbol{\eta}_{s} \\
\boldsymbol{\eta}_{s}^{\text{H}} & 0
\end{bmatrix} \label{eq:Psi_matrix}.
\end{equation}
By this way, the objective function (\ref{eq:reformulated_prob}) can be rewritten as
\begin{equation}
\min_{\bar{\boldsymbol{\varphi}}_{s}} \sum_{s=1}^{S}\bar{\boldsymbol{\varphi}}_{s}\boldsymbol{\Psi}_{s}\bar{\boldsymbol{\varphi}}_{s}^{\text{H}} \label{eq:compact_form}.
\end{equation}
Through SDR technique, we define the rank-one positive semidefinite matrix $\boldsymbol{\Phi}_{s} = \bar{\boldsymbol{\varphi}}_{s}^{\text{H}}\bar{\boldsymbol{\varphi}}_{s}$, transforming the problem (P3-2) into
\begin{subequations}\label{SDR problem}
	\begin{align}
	\text{(P3-4)}:\min_{\boldsymbol{\Phi}_{s}} & \sum_{s=1}^{S}\text{tr}\{\boldsymbol{\Phi}_{s}\boldsymbol{\Psi}_{s}\} \label{eq:sdp_form} \\
	\text{s.t.} & [\boldsymbol{\Phi}_{s}]_{l,l} = 1, \quad \forall s,l,\label{P3-3st1}\\
	& \text{Rank}(\boldsymbol{\Phi}_{s}) = 1, \quad \forall s,\label{P3-3st2}\\
	& \boldsymbol{\Phi}_{s} \succeq 0, \forall s. \label{P3-3st3}
	\end{align}
\end{subequations}
Usually, after removing the rank-one constraint, this probelm can be efficiently solved using standard convex optimization tools, followed by a Gauss randomization procedure to extract feasible solutions satisfying the unit-modulus constraints \cite{SDR method}. However, this approach may lead to performance degradation, as the relaxation of the rank-one constraint and subsequent randomization procedure do not guarantee optimality. Therefore, we employ a penalty-based iterative approach to address the rank-one constraint \cite{Wangqi}, leading to the final relaxed formulation
\begin{align}
\text{(P3-5)}: \min_{\boldsymbol{\Phi}_{s}} & \sum_{s=1}^{S}\text{tr}\{\boldsymbol{\Phi}_{s}^{i}\boldsymbol{\Psi}_{s}\} + \rho^{i}\left|(\boldsymbol{\varphi}'_s)^{i}\boldsymbol{\Phi}_{s}^{i}(\boldsymbol{\varphi}'_{s})^{i} - L\right| \label{eq:final_form}\\
\text{s.t.} & (\text{\ref{P3-3st1}}),(\text{\ref{P3-3st3}}),
\end{align}
where $\rho^{i} > 0$ is the penalty coefficient and $\bar{\boldsymbol{\varphi}'}^{i}_{s}$ denotes the eigenvector corresponding to the largest eigenvalue of $\boldsymbol{\Phi}^{i-1}_{s}$ obtained after the $(i-1)$-th iteration. In particular, an appropriate penalty factor must be selected to balance convergence speed and constraint adherence. To address this, we employ a residual-based adaptive penalty function as
\begin{align} \label{adaptive penalty function}
	\rho^{i} = \rho^{i-1} + C \times \frac{\vert\boldsymbol{\varphi}'_s\boldsymbol{\Phi}_{s}\boldsymbol{\varphi}'_{s} - L - 1 \vert}{\epsilon},
\end{align}
where $\epsilon$ and $C$ are tolerance threshold and weighting coefficient, respectively. By iteratively solving the penalized problem and adaptively updating the penalty factor until convergence, we can obtain the optimal metasurface beamforming $\boldsymbol{\Theta}$ under the given receiver $\boldsymbol{\gamma}$ and auxiliary variables $\boldsymbol{\vartheta}$. Then, by further alternately updating the optimal receiver $\boldsymbol{\gamma}$, auxiliary variables $\boldsymbol{\vartheta}$, and $\boldsymbol{\Theta}$, we can ultimately obtain the optimal metasurface beamforming solution under the given scheduling.

Finally, by alternatingly optimizing scheduling $\boldsymbol{\alpha}$ and beamforming $\boldsymbol{\Theta}$, fixing one and optimizing the other, we can obtain the joint scheduling and beamforming results for LEO satellite constellation communication. The detailed steps are shown in Algorithm 2.

\begin{algorithm}
    \renewcommand{\algorithmicrequire}{\textbf{Input:}}
    \renewcommand{\algorithmicensure}{\textbf{Output:}}
    \caption{Joint Scheduling and Beamforming Design}
    \label{alg:2}
    \begin{algorithmic}[1]
        \REQUIRE $\hat{\mathbf{h}}_{s,k}$, $\mathbf{g}_{s,n}$, $N_t, L, K, \sigma_{k}^2$, $\mathcal{S}_{k}$;
        \ENSURE $\boldsymbol{\Theta}, \boldsymbol{\alpha}$;
        \STATE \textbf{Initialization:} Randomly generate $\boldsymbol{\Theta}^0 = \text{diag}(\boldsymbol{\varphi}^{(0)})$.
        \REPEAT
			\STATE Update scheduling of feed antenna and TUs $\boldsymbol{\alpha}^{j}$ using Algorithm 1;
			\STATE Set iteration counter $l = 0$ and $\text{OBJ}^0$ = 0 for the beamforming optimization of metasurface;
		\REPEAT
            \STATE Compute $\mu_k^{(l)}$ and $\Lambda_k^{(l)}$ using (\ref{mu_def}) and (\ref{lambda_def});
            \STATE Update $\gamma_k^{(l)} = (\mu_k^{(l)})/(\Lambda_k^{(l)})$;
            \STATE Update $\vartheta_k^{(l)} = 1/\text{MSE}_k^{(l)} $;
			\STATE Compute $\boldsymbol{\Xi}_s^{(l)}$ and $\boldsymbol{\eta}_s^{(l)}$ using (\ref{eq:Xi_def}) and (\ref{eq:eta_def});
			\STATE Construct extended matrix $\boldsymbol{\Psi}_s^{(l)}$ as (\ref{eq:Psi_matrix});
			\REPEAT
				\STATE Solve penalized SDP problem (\ref{eq:final_form}) for $\boldsymbol{\Phi}_s^{(i)}$;
				\STATE Perform eigen decomposition on $\boldsymbol{\Phi}_s^{(i)}$, extract principal eigenvector $\bar{\boldsymbol{\varphi}'}_s^{(i)}$;
				\STATE Update penalty coefficient $\rho^{i+1}$ according to (\ref{adaptive penalty function});
			\UNTIL{convergence.}
				\STATE Extract $\boldsymbol{\varphi}_s^{(l+1)}$ as first $L$ elements of $\sqrt{L+1}\bar{\boldsymbol{\varphi}'}_s^{(i)}$;
			\STATE  Calculate $\text{OBJ}^{l+1} = \sum_{k=1}^{K}\beta_{k}\bar{R}_{k}$ and calculate $\Delta = \text{OBJ}^{l+1}-\text{OBJ}^{l}$;
            \STATE $l=l+1$;
        \UNTIL{$\Delta < \epsilon $ or $l > L$.}
		\STATE Updata $\boldsymbol{\Theta}^{j}$ with $\boldsymbol{\varphi}_s^{(l+1)}$ from the step 17;
		\UNTIL{convergence.}
    \end{algorithmic}
\end{algorithm}

\subsection{Algorithm Analysis}
Herein, we analyze the proposed algorithm from the perspectives of convergence behavior and computational complexity.
\subsubsection{Convergence Analysis}
In Algorithm \ref{alg:2}, the optimal solutions of scheduling and beamforming are achieved by iteratively optimizing $\boldsymbol{\alpha}$ and $\boldsymbol{\Theta}$. Let $f(\boldsymbol{\alpha}^j,\boldsymbol{\Theta}^j)$ denote the lower bound of WSR with the obtained $\boldsymbol{\alpha}^{j}$ and $\boldsymbol{\Theta}^{j}$ after the $j$-th iteration. After obtaining $\boldsymbol{\alpha}^{j+1}$ by conducting Algorithm 1, we have
\begin{align}
	 f(\boldsymbol{\alpha}^{j+1}, \boldsymbol{\Theta}^j) \geq f(\boldsymbol{\alpha}^{j}, \boldsymbol{\Theta}^j).\label{inequ1}
\end{align}
Similarly, after obtaining $\boldsymbol{\Theta}^{j+1}$ by conducting steps 5-20 in Algorithm 2, we have
\begin{align}
	 f(\boldsymbol{\alpha}^{j+1}, \boldsymbol{\Theta}^{j+1}) \geq f(\boldsymbol{\alpha}^{j+1}, \boldsymbol{\Theta}^{j}).\label{inequ2}
\end{align}
Finally, with the inequalities (\ref{inequ1}) and (\ref{inequ2}), we can get
\begin{align}
	 f(\boldsymbol{\alpha}^{j+1}, \boldsymbol{\Theta}^{j+1}) \geq f(\boldsymbol{\alpha}^{j}, \boldsymbol{\Theta}^{j}).\label{inequ3}
\end{align}
The inequality (\ref{inequ3}) establishes that the lower bound of WSR exhibits a non-decreasing property across successive iterations. Furthermore, the WSR is inherently bounded above due to the finite transmit power constraints. Consequently, the algorithm is guaranteed to converge to a fixed point of the surrogate problem \cite{convergence}. It is note that this convergence guarantee does not extend to a stationary point of the original MINLP, which is consistent with standard AO-based approaches for mixed-integer non-convex optimization.

\subsubsection{Complexity Analysis}
Further, we analyze the complexity of the proposed Algorithm 2, which includes the MCMF-based Algorithm 1 and the beamforming design. For the MCMF-based Algorithm 1, it exhibits polynomial-time complexity. As illustrated in Fig. \ref{Constructed flow}, the network contains $K+SN_t$ nodes and $\sum_{k=1}^K |\mathcal{S}_k|N_t$ directed edges connecting TUs to their visible feed antennas. Each flow augmentation saturates at least one unit of capacity, and the maximum $K$ augmentations are required. The dominant complexity comes from the shortest path faster algorithm (SPFA) used in Step 2 of the Algorithm \ref{alg:unified_mcmf} to find the shortest path is $O((K+SN_t)(\sum_{k=1}^K |\mathcal{S}_k|N_t))$ per minimum-cost path computation. So the overall complexity of Algorithm 1 is $O(K(K+SN_t)(\sum_{k=1}^K |\mathcal{S}_k|N_t))$ \cite{SPFA}. The complexity of beamforming design primarily arises from solving the problem \text{(P3-4)}, which is imposed by $S(L+1)$ LMI constraints of size $1$ and $S$ LMI constraints of size $L + 1$. Based on the complexity analysis in \cite{Complexity analysis} for the SDP by the interior-point method, the complexity of beamforming per iteration is $\mathcal{O}(\sqrt{3S(L+1)}(S(L+1)^2(L+2)+S(L+1))\ln(1/\epsilon))$, where the $\epsilon > 0$ is the desired convergence accuracy. Overall, the complexity of the proposed Algorithm \ref{alg:2} is $\mathcal{O}(I_{O}((\sqrt{3S(L+1)}(S(L+1)^2(L+2)+S(L+1))\ln(1/\epsilon))+(K+SN_t)(\sum_{k=1}^K |\mathcal{S}_k|N_t)))$ with $I_{O}$ denotes the outer loop times of the alternating optimization of scheduling and beamforming. It is worth noting that the SDR-based metasurface beamforming subproblem, which exhibits cubic complexity $\mathcal{O}(L^3)$, is not intended for real-time onboard execution. Instead, this optimization is performed offline or on a slow time scale corresponding to satellite geometry updates or significant changes in user distribution. In contrast, the online onboard processing is limited to the polynomial-complexity MCMF-based scheduling and lightweight phase selection operations, which are well suited for space-grade processors.

\section{Simulation Result}
In this section, we evaluate the performance of the proposed joing scheduling and beamforming design algorithm for metasurface antenna enabled LEO satellite constellation communication. Without loss of generatity, we adopt the Walker delta constellation for simulation, the detailed configuration of which is listed in Table \ref{Table1} \cite{Walker Delta constellation}. Then, we select three adjacent satellites to form a group providing services for the ground TUs. It is assumed that the relative geometry and ISL connectivity within the satellite cluster remain quasi-static over each optimization window, since orbital motion is smooth and predictable \cite{Satellite_geometry}. As a result, topology changes and handovers occur on a slower time scale than resource allocation and beamforming optimization, which enables the proposed framework to operate under a locally stable cluster topology during each optimization period. In particular, the initial position of the LEO satellites are given by $\mathbf{q}^{\text{sat}}_{1} = 10^5 \times [-4.0426,60.887,5.5235]$, $\mathbf{q}^{\text{sat}}_{2} = 10^5 \times [-1.5338,64.153,2.0957]$ and $\mathbf{q}^{\text{sat}}_{3} = 10^5 \times [-1.6248,68.527,5.9641]$ within the Earth-centered Earth-fixed (ECEF) coordinate system, respectively. The coverage angle of satellite can be computed as $\eta = \arccos(R_e/(R_e + h_O)\cos \theta_{\text{min}}-\theta_{\text{min}})$ with $h_O$ and $\theta_{\min}$ being the altitude of satellite orbit and minimum elevation angle of TUs, respectively. The total number of TUs is set to $K = 30$, which are divided into three equal groups. Each group of 10 TUs is randomly and uniformly distributed within the coverage area of one of the three satellites. Then, the relative angle bewteen the $k$-th TU position angle and the $s$-th satellite position angle can be computed as $  \varphi_{s,k} = \arccos(\frac{\mathbf{q}_{k}^{u} \cdot \mathbf{q}_{s}^{\text{sat}}}{\Vert \mathbf{q}_{k}^{u} \Vert \cdot \Vert \mathbf{q}_{s}^{\text{sat}} \Vert})$ where $\mathbf{q}_{k}^{\text{u}}$ represents the coordinate of the $k$-th TU. By this way, it is easy to obtain the set of covering TUs $\mathcal{U}_{s}$ of the $s$-th satellite, i.e., $\varphi_{s,k}\leq \eta$ for the TUs belong to set $\mathcal{U}_{s}$. Unless other stated, the parameters of LEO SATCOM system are listed in Table \ref{Table2}.

\begin{table}
    \centering
    \caption{Parameters of LEO Satellite Constellation}
    \label{Table1}
    \begin{tabular}{|c|c|} \hline
    \textbf{Parameter} & \textbf{Value} \\ \hline
    Orbital altitude $h_{O}$& 550km \\ \hline
    Number of orbital planes $N^{\#}$ & 36  \\ \hline
    Number of LEO satellites per orbital plane $P^{\#}$ & 22 \\ \hline
    Orbital inclination & $53^\circ$ \\ \hline
    Minimum elevation angle of TUs $\theta_{\min}$ & $10^\circ$ \\ \hline
    Phase factor & 1 \\ \hline
    \end{tabular}
\end{table}

\begin{table}[!ht]
    \centering
    \caption{Parameter Setup of LEO Satellite Constellation System}
    \label{Table2}
    \begin{tabular}{|c|c|} \hline
    \textbf{Parameter} & \textbf{Value} \\ \hline
    Carrier frequency $f_c$ &	30 GHz  \\ \hline
    Spectrum bandwidth $B$ & 25 MHz  \\ \hline
    Number of LEO satellites in a group $S$ & 3 \\ \hline
    Number of feed antennas $N_t$ & 13 \\ \hline
    Transmit power budget of feed antenna $P_{s,n}$ & 30 dBm \\ \hline
    Total number of TUs $s$ $K_{s}$ & 30\\ \hline
    Number of elements of metasurface $L_x\times L_y$ & 20 $\times $ 20\\ \hline
	Area of each meta-atom $A_{t}$ &  25 mm$^2$\\ \hline
	Inter element spacing of metasurface &  5 mm\\ \hline
    Noise temperature $T$ & 290 K \\ \hline
    Boltzmann constant $\kappa$ & $1.38\times10^{-23} J/m$\\ \hline
    Rain attenuation mean $\mu_r$ & -2.6 dB \\ \hline
    Rain attenuation variance $\sigma_{r}^2$ & 1.63 dB \\ \hline
	Angular deviations $\delta_{s,k,p},\xi_{s,k,p}$ & $\mathcal{N}(0, 0.01) ^\circ$ \\ \hline
    Maximum iteration number $T^{\max}$ & 30\\ \hline
	Channel error variance$\sigma_e^2$ & 1/SNR\\ \hline
    \end{tabular}
\end{table}

\begin{figure}[!ht]
	\centering
	\includegraphics[width=3.4in]{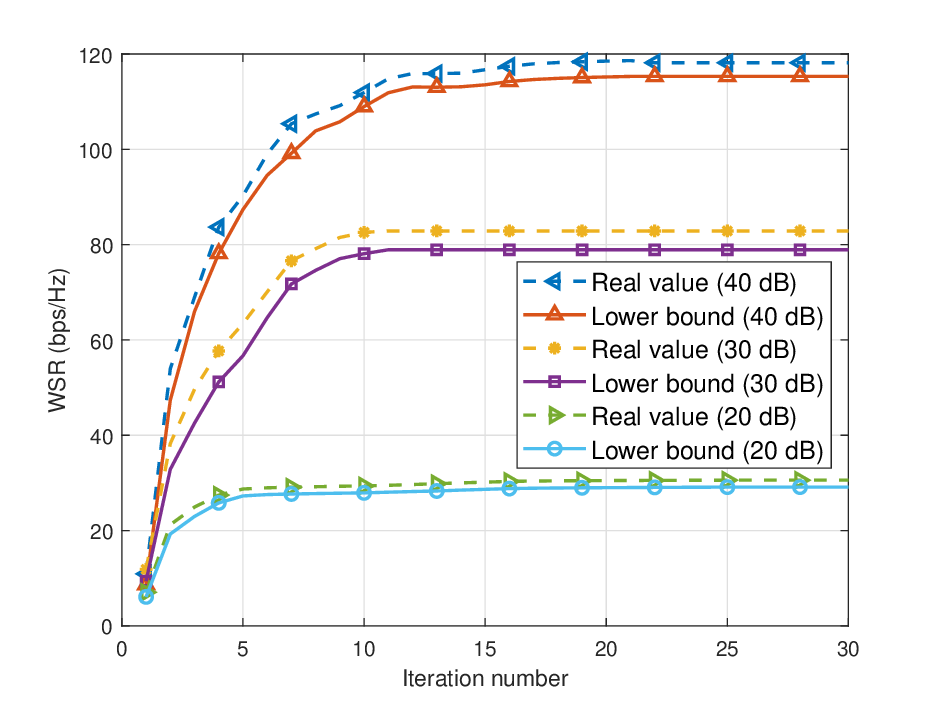}
	\caption{Convergence behaviour of the proposed Algorithm \ref{alg:2}.}
	\label{convergence}
\end{figure}
First, Fig. \ref{convergence} demonstrates the convergence behaviour of the proposed Algorithm \ref{alg:2}. It can be observed that the optimized WSR lower bound closely matches the actual WSR, and both curves exhibit nearly identical convergence behavior, converging within 30 iterations under different feed antenna transmit power settings. In particular, due to the slow variation of large-scale channel characteristics in LEO systems and the robustness provided by the CSI error model, the proposed optimization is executed on a coarse time scale and does not require per-slot real-time updates. The results confirm the rapid convergence of the proposed joint scheduling and beamforming algorithm, demonstrating its applicability in real-world LEO satellite communications.

\begin{figure}[!ht]
	\centering
	\includegraphics[width=3.4in]{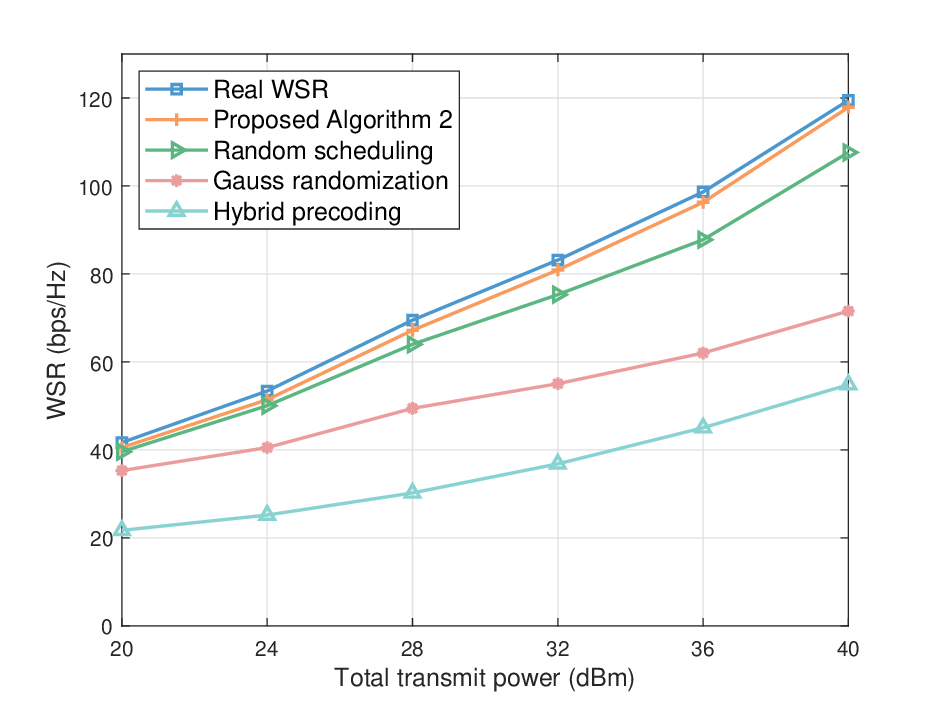}
	\caption{WSR versus the transmit power of antenna.}
	\label{WSRversusPower}
\end{figure}
Next, we evaluate the WSR performance versus the total transmit power of the feed antenna, comparing our proposed Algorithm \ref{alg:2} against three benchmark schemes: (i) random scheduling of feed antennas and TUs, (ii) metasurface optimization using Gaussian randomization to handle rank-one constraints, and (iii) hybrid precoding system with $N_t = 13$ RF chains and 10 $\times$ 10 uniform planar array (UPA) antennas. The results demonstrate that the proposed scheduling algorithm achieves substantial gains over random scheduling, while significantly outperforming the Gaussian randomization scheme for metasurface optimization. This improvement is expected because, unlike the Gauss randomization scheme, the proposed algorithm optimizes the metasurface without relaxing the constraints. Notably, the proposed algorithm consistently outperforms the hybrid beamforming benchmark under identical channel and power constraints, and its performance advantage stems from the additional spatial degrees of freedom introduced by the metasurface architecture. Furthermore, this gain can be progressively enhanced by increasing the number of low-cost passive meta-atoms, enabling scalable performance improvement without the need for additional RF chains or active transceiver hardware.

\begin{figure}[!ht]
	\centering
	\includegraphics[width=3.4in]{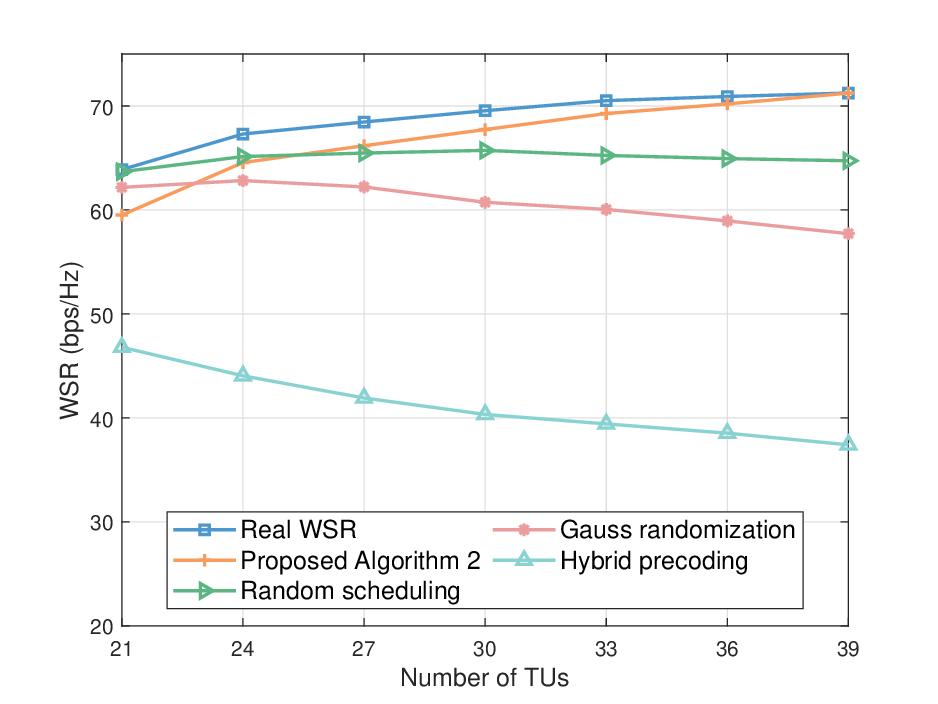}
	\caption{WSR versus the number of TUs.}
	\label{WSRversusTUs}
\end{figure}

In Fig. \ref{WSRversusTUs}, we show the WSR performance under different numbers of TUs. The results reveal that as the number of TUs increases, the gap between the optimized lower bound of WSR and the real WSR gradually narrows, eventually converging to identical values. This convergence phenomenon occurs because a larger number of active TUs leads to full utilization of the available feed antennas, making the interference pattern in the optimized lower bound approximation increasingly representative of the real system conditions. Consequently, the theoretical lower bound becomes an accurate predictor of the achievable WSR when all feed antennas are actively engaged in transmission. Additionally, the proposed algorithm consistently demonstrates performance gains over benchmark schemes across various numbers of TUs. Notably, it maintains superior spectral efficiency even in fully loaded scenarios where the number of TUs equals the number of feed antennas, outperforming conventional hybrid precoding. This advantage stems from the algorithm's joint optimization of scheduling and beamforming, which more effectively mitigates inter-user interference. Furthermore, the interference suppression enabled by the metasurface architecture contributes to its robust performance, particularly in interference-dominated regimes.

\begin{figure}[!ht]
	\centering
	\includegraphics[width=3.4in]{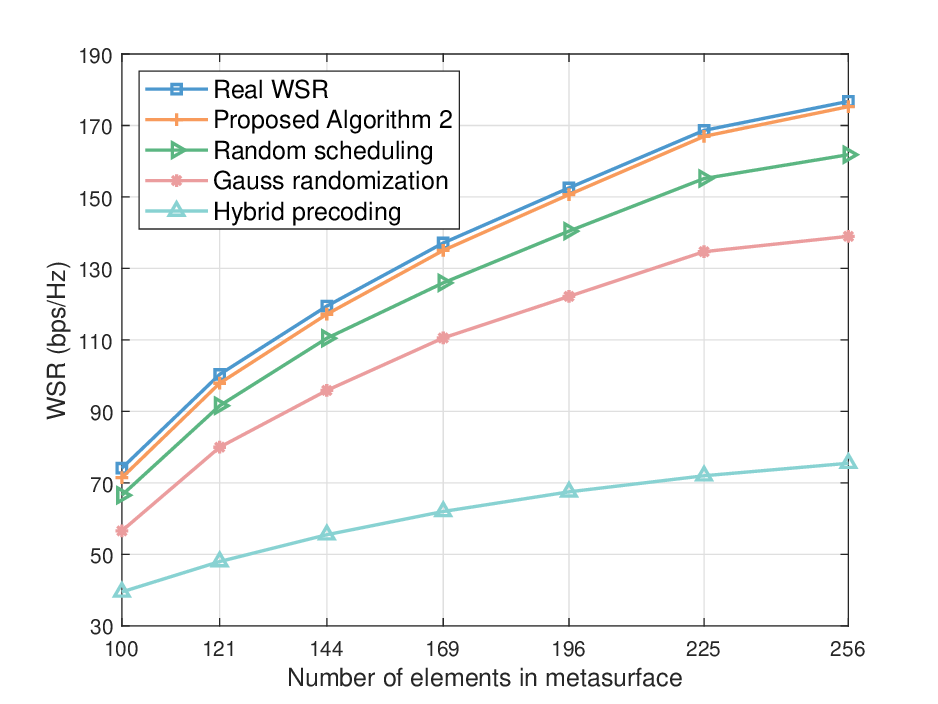}
	\caption{WSR versus the number of elements in metasurface.}
	\label{WSRversusMx}
\end{figure}
Fig. \ref{WSRversusMx} illustrates the WSR performance versus the number of metasurface elements. The results demonstrate that all schemes exhibit improved WSR as the number of elements increases, while the rate of improvement slows down gradually. This nonlinear growth pattern occurs because when the metasurface contains few elements, adding more elements provides dual benefits. The enhanced signal power stems from improved beamforming gain, while superior interference suppression is achieved through more precise spatial filtering. However, once the number of elements reaches a sufficient level, typically around the spatial degrees of freedom required by the system, the marginal improvement primarily comes from additional power gain, while the interference management capability approaches its theoretical limit. This explains why the WSR growth rate gradually slows as the metasurface size increases beyond this critical point. For the considered user density and channel conditions, the results in Fig.~\ref{WSRversusMx} indicate that the performance saturation appears when the number of metasurface elements becomes sufficiently larger than the number of simultaneously active TUs. This provides a practical design guideline for hardware deployment, suggesting that excessive over-provisioning of metasurface elements may lead to limited performance benefits while significantly increasing hardware complexity and system cost.

\begin{figure}[!ht]
	\centering
	\includegraphics[width=3.4in]{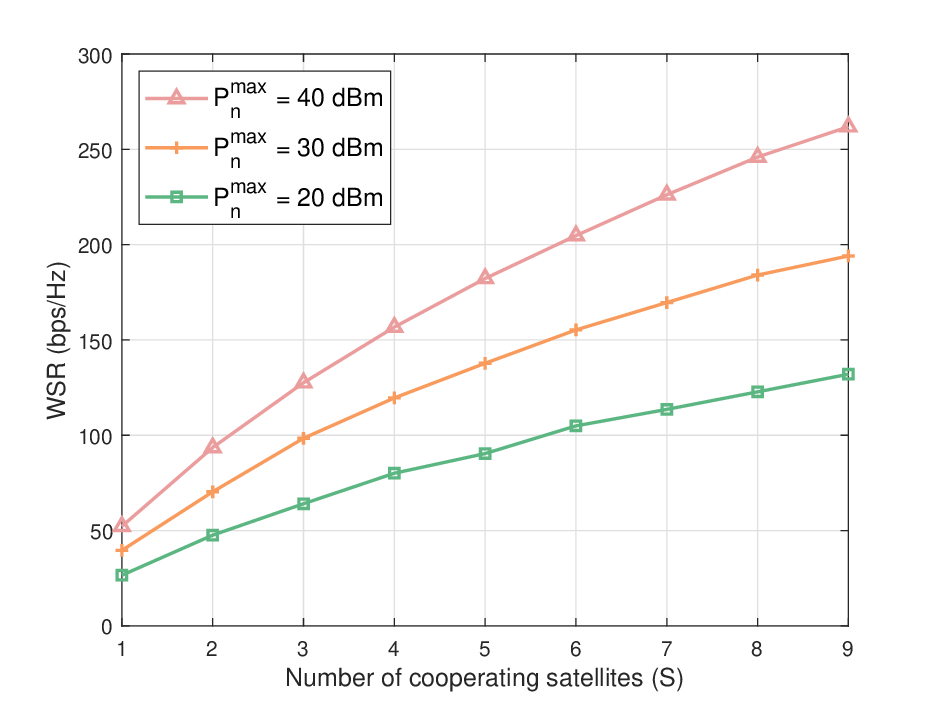}
	\caption{WSR versus the number of cooperating satellites.}
	\label{WSRversussatellites}
\end{figure}

Then, Fig.~\ref{WSRversussatellites} illustrates the impact of the number of cooperating satellites on the achievable WSR under different maximum transmit power constraints. As the cooperative cluster expands, the WSR increases in a quasi-linear manner. This behavior reflects the combined effect of enhanced spatial degrees of freedom and distributed transmission capability enabled by multi-satellite cooperation. By leveraging coordinated beamforming across multiple satellites, co-channel interference is progressively suppressed, leading to more efficient spatial resource utilization and improved spectral efficiency. Meanwhile, the distributed nature of the cooperative architecture enables performance gains through both power aggregation and spatial diversity rather than relying solely on power scaling. Moreover, the results confirm the scalability of the proposed framework for large cooperative clusters. Even when the number of cooperating satellites increases to nine, the system maintains stable and consistent performance improvements.

\begin{figure}[!ht]
	\centering
	\includegraphics[width=3.4in]{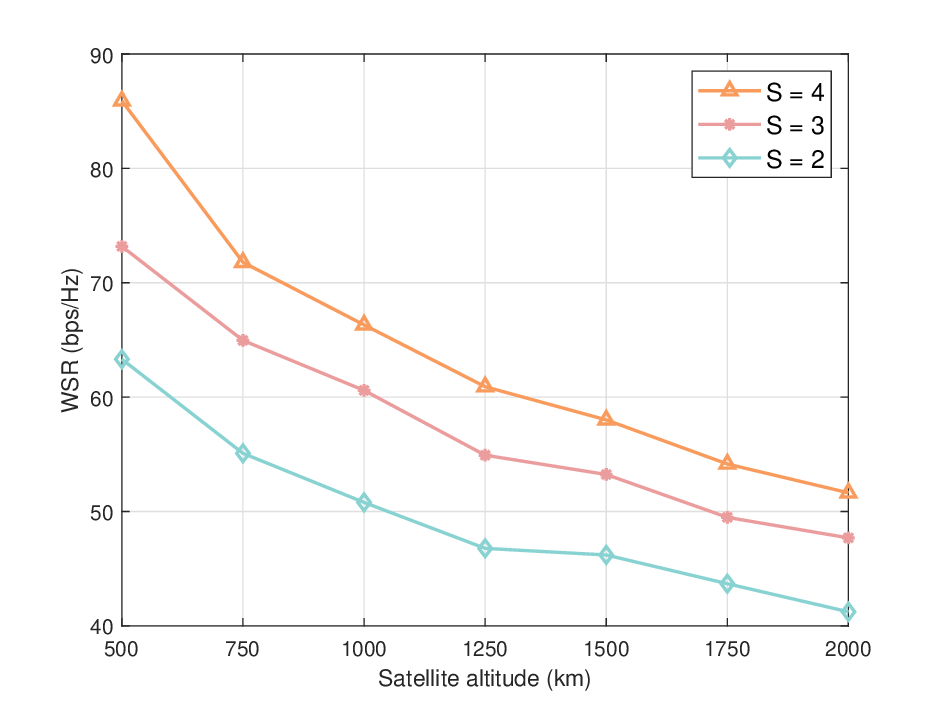}
	\caption{WSR versus the altitude of satellites.}
	\label{WSRversusAltitude}
\end{figure}
Fig. \ref{WSRversusAltitude} illustrates the impact of satellite orbit altitude on WSR performance under varying numbers of cooperating satellites. As observed, the WSR deteriorates with increasing orbital altitude, primarily due to higher propagation path loss. This finding aligns with existing research advocating for very low Earth orbit (VLEO) satellite constellations, which mitigate path loss by operating at significantly reduced altitudes \cite{VLEO}. Furthermore, the results demonstrate that a greater number of cooperating satellites enhances the overall system performance, highlighting the benefits of multi-satellite collaboration in mitigating channel impairments.

\begin{figure}[!ht]
	\centering
	\includegraphics[width=3.4in]{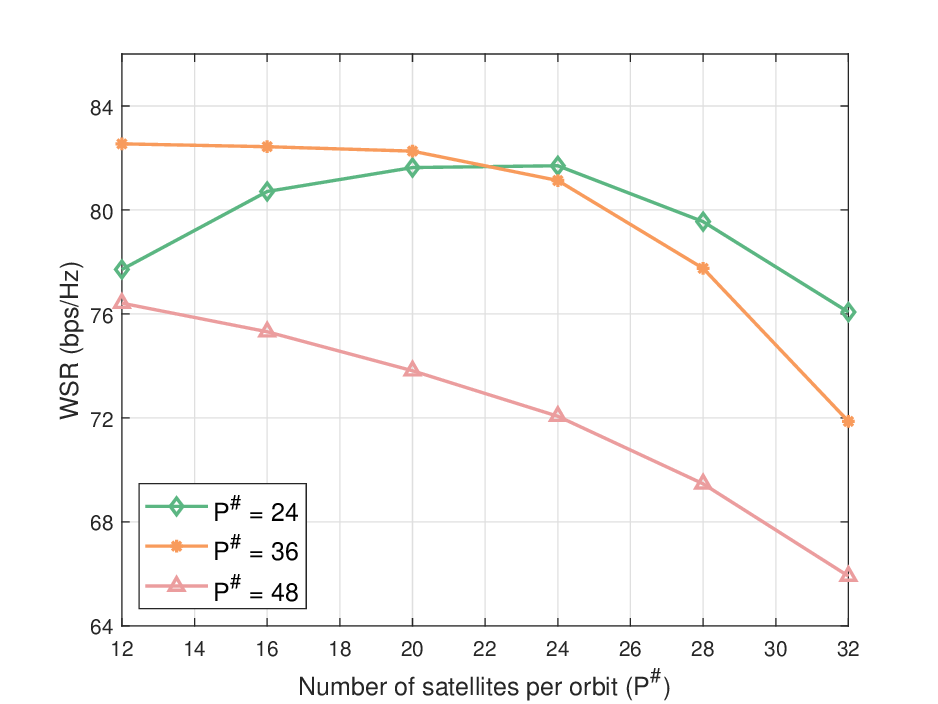}
	\caption{WSR versus the density of the LEO satellite constellation.}
	\label{WSRversusDensity}
\end{figure}
Finally, Fig. \ref{WSRversusDensity} demonstrates the variation in system performance in terms of WSR with different satellite densities. It can be observed that when the number of orbits is $N^{\#} = 24$, the WSR initially increases but then decreases as the number of satellites per orbit grows. For a larger number of satellite orbits ($N^{\#} = 36$ and $N^{\#} = 48$), the WSR shows a more significant decline with increasing satellite density. This behavior stems from competing effects that additional satellites improve user access to the nearest visible satellite, and thus enhance link quality at lower densities. However, beyond a certain point, further increment in satellite density leads to stronger inter-satellite interference. In addition, higher density increases deployment costs and collision risks. These results highlight the critical need to balance spectral efficiency with practical considerations of cost and operational safety in LEO constellation design.

\section{Conclusion}
This paper established a novel MA-enabled LEO satellite constellations communication framework. By shifting beamforming operations from the digital domain to EM wavefront manipulation, we realized hardware-efficient signal processing while mitigating interference in satellite constellation networks. The proposed joint scheduling and beamforming algorithm, which synthesizes graph-based scheduling design and convex-relaxed phase shift design, resolved the complex optimization design problem with tractable computational overhead. Extensive simulations confirmed substantial improvements compared with the existing benchmarks. Future extensions of this framework may incorporate learning-based surrogate models to enable ultra-low-latency metasurface beamforming. Furthermore, the proposed metasurface-enabled LEO satellite framework exhibits strong potential for integration with energy-efficient software-defined 5G/6G multimedia IoV systems, facilitating its incorporation into integrated space--air--ground networks and further enhancing scalability, real-time adaptability, and practical applicability. In particular, LEO satellite constellations can provide seamless coverage extension and high-throughput connectivity for vehicular services in remote or infrastructure-scarce environments, while the leader-satellite architecture naturally aligns with software-defined networking paradigms for cross-layer resource orchestration. The inherent energy efficiency of metasurface-based beamforming further supports sustainable and reliable multimedia vehicular communications, enabling the development of future integrated space--air--ground IoV networks.

\end{document}